\documentclass[a4paper,UKenglish,cleveref, autoref, thm-restate]{lipics-v2021}
\pdfoutput=1 
\hideLIPIcs 

\usepackage[ruled]{algorithm}
\usepackage{algpseudocode}
\newcommand{\tl}{\triangleleft}
\newcommand{\tle}{\trianglelefteq}
\newcommand{\tr}{\triangleright}
\newcommand{\tre}{\trianglerighteq}
\bibliographystyle{plainurl}

\title{The worst-case complexity of symmetric strategy improvement}

\author{Tom {van Dijk}}{Formal Methods and Tools, University of Twente, The Netherlands \and \url{https://www.tvandijk.nl/} }{
t.vandijk@utwente.nl}{}{}

\author{Georg Loho}{Discrete Mathematics and Mathematical Programming, University of Twente, The Netherlands \and \url{https://lohomath.github.io/}}{g.loho@utwente.nl}{}{}

\author{Matthew T. Maat}{Discrete Mathematics and Mathematical Programming, University of Twente, The Netherlands \and \url{https://people.utwente.nl/m.t.maat}}{m.t.maat@utwente.nl}{}{}

\authorrunning{T. van Dijk, G. Loho and M.\,T. Maat } 

\Copyright{Tom van Dijk, Georg Loho and Matthew T. Maat} 

\ccsdesc{Theory of computation~Logic and verification}
\ccsdesc{Mathematics of computing~Discrete mathematics}

\keywords{Parity game, Mean payoff game, Symmetric strategy improvement, Strategy improvement, Worst-case complexity} 

\category{} 

\relatedversion{} 

\acknowledgements{}

\nolinenumbers

\begin{document}

\maketitle
\begin{abstract}
Symmetric strategy improvement is an algorithm introduced by Schewe et al.~(ICALP 2015) that can be used to solve two-player games on directed graphs such as parity games and mean payoff games. In contrast to the usual well-known strategy improvement algorithm, it iterates over strategies of both players simultaneously. The symmetric version solves the known worst-case examples for strategy improvement quickly, however its worst-case complexity remained open. 

We present a class of worst-case examples for symmetric strategy improvement on which this symmetric version also takes exponentially many steps.  
Remarkably, our examples exhibit this behaviour for any choice of improvement rule, which is in contrast to classical strategy improvement where hard instances are usually hand-crafted for a specific improvement rule. 
We present a generalized version of symmetric strategy iteration depending less rigidly on the interplay of the strategies of both players. However, it turns out it has the same shortcomings. 
\end{abstract}

\section{Introduction}
We study certain classes of infinite turn-based games on directed graphs between two players, also called infinitary payoff games, which includes parity games and discounted/mean payoff games. 
These games are interesting from an algorithmic perspective and from the viewpoint of complexity theory.

First, there are various problems that relate to solving these games. Solving parity games is important for formal verification and synthesis of programs, as many properties of programs are naturally specified by means of fixed points; parity games capture the expressive power of nested least and greatest fixed point operators. 
In particular, there are linear reductions between parity games and the model checking problem of the modal $\mu$-calculus~\cite{DBLP:journals/tcs/EmersonJS01,DBLP:conf/stacs/Walukiewicz96}. 
Solving mean payoff games is equivalent to problems like
solving energy games~\cite{BrimChaloupkaDoyenGentiliniRaskin:2011},
deciding feasibility in tropical linear programming~\cite{Akian2012TropicalGames}, 
scheduling with AND/OR precedence constraints~\cite{MoehringSkutellaStork:2004}, and
the max-atoms problem~\cite{DBLP:journals/mst/BodirskyM18}. 

Another notable aspect of these games is their complexity status. It is known that there is a polynomial-time reduction from parity games to mean payoff games, and from mean to discounted payoff games.
Many classes of these games are known to be contained in the intersection of NP and coNP \cite{CONDON1992203, Zwick1996TheGraphs}, and parity games and mean payoff games have been shown to even lie in the intersection of UP and coUP \cite{JURDZINSKI1998119}. However, the question whether there exists a polynomial-time algorithm for any of these games has been open for decades. 

\subparagraph{Related work} 
Many algorithms have been proposed for solving parity games and mean payoff games with the main algorithm classes being value iteration~\cite{Jurdzinski2000SmallGames,DorfmanKaplanZwick:2019,FijalkowGawrychowskiOhlmann:2020}, strategy improvement~\cite{Jurdzinski2000AGames,Bjorklund2007AGames} and attractor-based algorithms~\cite{DBLP:conf/cav/Dijk18,Zielonka1998InfiniteTrees}, where we list only a small part of the many papers. 
For most of these algorithms there are instances which take exponentially many steps; these are usually simple for value iteration, while work by Van Dijk~\cite{DBLP:journals/corr/abs-1807-10210} demonstrates an exponential lower bound to many attractor-based algorithms. 
Recently, it has been shown that parity games can actually be solved in quasi-polynomial time: after the breakthrough in~\cite{Calude2017DecidingTime}, several other quasi-polynomial algorithms have been found, including \cite{DBLP:journals/fcomp/DellErbaS22, DBLP:journals/sttt/FearnleyJKSSW19,Jurdzinski2017SuccinctGames,Parys2019ParityTime}.
However, most of these approaches are likely to be inherently superpolynomial as demonstrated in~\cite{conf/soda/CzerwinskiDFJLP19}.

Strategy improvement \cite{Bjorklund2007AGames, Fearnley2017EfficientGames, Jurdzinski2000AGames, Ohlmann:2021, Puri1995TheorySystems, DBLP:conf/csl/Schewe08} (also called strategy iteration or policy iteration) is considered to be a viable candidate for a polynomial-time algorithm for many classes of infinitary payoff games due to its inherent combinatorial nature. This method evaluates strategies by means of a function on the nodes in the graph called the \emph{valuation}. It then iteratively makes changes to a strategy, improving the valuation, until an optimal strategy is found. When there are multiple options for improvements, the choice is made by a so-called \emph{improvement rule}. 
There are a few valuations mainly used in the literature~\cite{Bjorklund2007AGames,Fearnley2017EfficientGames,Jurdzinski2000AGames,Puri1995TheorySystems}.  
Based on these, many well-known improvement rules have exponential worst-case running time as demonstrated by sophisticated worst-case constructions, in particular by Friedmann et al. (see e.g. \cite{DBLP:journals/mp/DisserFH23, Friedmann2011ExponentialPrograms,Friedmann2013AMemorization}). 
The main idea behind the worst-case constructions is that one player can `trap' the other player repeatedly in different configurations so that the encountered strategies simulate a binary counter. 

While infinitary payoff games are symmetric in the two players by their nature, only some algorithms explicitly exploit this symmetry. 
Most attractor-based algorithms for parity games are inherently symmetric by simultaneously considering the game from the perspective of both players, while value iteration and strategy improvement algorithms are mostly inherently asymmetric.
Recently, a quasipolynomial symmetric algorithm for parity games was proposed by Jurdzinski et al.~\cite{JurdzinskiMorvanOhlmannThejaswini:2020}. 

Our work is mainly motivated by a symmetric version of strategy iteration for infinitary payoff games established in \cite{Schewe2015SymmetricImprovement}. 
This variant maintains two strategies, one for each player. 
The players then iteratively improve their strategy, using information from the best response to their opponent's strategy. 
This reduces the number of iterations needed in practice, and also does not have superpolynomial running time on the type of examples that were constructed for classic strategy improvement. 
The worst-case running time of this variant was unknown so far. 

\subparagraph{Our contribution} We develop a construction exhibiting exponential running time for symmetric strategy improvement (SSI). 
Our main result is the following:
\begin{theorem}\label{thm:worstcase}
In the worst case, the number of iterations of symmetric strategy improvement in parity games, mean payoff games and discounted payoff games is exponential in the number of nodes and edges in the graph \emph{independently} of the improvement rule. 
This holds for any of the currently used valuations in the literature.
\end{theorem}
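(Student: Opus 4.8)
The plan is to reduce everything to parity games and then build an explicit family of hard instances. Since there are polynomial reductions from parity games to mean payoff games and from mean payoff games to discounted payoff games that preserve the relevant ordering of strategies, and since the valuations used in the literature for these games agree, up to the order they induce on strategies, on the instances we build, it should suffice to exhibit a family of parity games $(G_n)_{n \ge 1}$ with $O(n)$ nodes and edges on which SSI performs $2^{\Omega(n)}$ iterations. I would then check separately that each of the standard valuations (Jurdziński's, the Björklund–Vorobyov discrete valuation, and the discounted/mean payoff valuations obtained through the reductions) induces the same set of improving bi-switches on $G_n$, so that the bound transfers uniformly to all of them.

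The core of the construction is a binary counter, in the spirit of the exponential lower bounds of Friedmann et al., but engineered to survive the symmetric filtering of SSI. Each bit is realised by a gadget whose local configuration (which of two outgoing edges the two maintained strategies select) encodes a $0$ or a $1$, so that a pair of strategies $(\sigma,\tau)$ encodes a number in $\{0,\dots,2^n-1\}$. I would design the edges and priorities so that the best responses $\mathrm{br}(\sigma)$ and $\mathrm{br}(\tau)$ route play through a carry chain: the least significant bit that is currently $0$ is the unique position where an improving bi-switch is offered, and performing it flips that bit to $1$ while the filtering temporarily forbids any switch at the lower-order bits that are being reset. The design goal is that the reachable strategy pairs traverse all $2^n$ counter values, forcing at least $2^n$ iterations.

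The main obstacle, and the point where this departs fundamentally from rule-specific constructions, is making the lower bound hold for \emph{every} improvement rule, including the most aggressive rule that applies all profitable switches at once, and doing so against the very bi-switch filter that makes SSI fast on the classical examples. The key step is therefore a structural lemma that characterises, at every one of the exponentially many reachable pairs $(\sigma,\tau)$, the exact set of bi-switches SSI deems improving, and shows that this set only ever contains switches belonging to a single counter position. If I can prove that the symmetric best-response filtering collapses the available improving bi-switches to those of one bit at a time, then applying any nonempty subset of them advances the counter by at most one, so no rule can skip counter values. Establishing this invariant requires computing the best responses and the induced valuations at each configuration and verifying that the opponent's strategy masks the switches at the bits that should remain frozen; maintaining this best-response bookkeeping as an invariant along the entire exponential run is the technically heavy part.

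Finally, I would observe that the generalised symmetric iteration, which relaxes the rigid coupling between the two strategies, still respects enough of the same filtering to admit the identical invariant, so that the very same family $(G_n)_{n \ge 1}$ witnesses an exponential lower bound for it as well; this yields the "same shortcomings" claim advertised in the introduction.
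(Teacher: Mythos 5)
Your high-level plan---an explicit family of linear-size sink parity games on which SSI takes $2^{\Omega(n)}$ iterations under every improvement rule, with the transfer to mean/discounted payoff games and to the other valuations handled by the standard reductions and by the fact that the usual valuations agree on such instances---is exactly the paper's outline. The problem is that everything substantive is deferred to the sentence ``I would design the edges and priorities so that \dots'' and to the acknowledged ``technically heavy part''; that design is the entire content of the theorem, and the mechanism you sketch is not the one that works. You propose a Friedmann-style trap/carry-chain counter in which the bi-switch filter exposes only the least significant zero bit. But SSI was introduced precisely because it defeats those trap-based counters, and the paper's construction is of a different kind: it takes the Gurvich--Karzanov--Khachiyan / Bj\"orklund--Vorobyov ladder ($a_1,\dots,a_{n+1}$ for player 0 interleaved with $d_1,\dots,d_{n+1}$ for player 1) and adds back-edges from $a_i$ to $a_1$ and from $d_i$ to $d_1$. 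The key insight, absent from your proposal, is that the optimal counterstrategy to a bad strategy can itself be a bad strategy, so each player's filter steers the other into re-solving a subgame. This makes $G_j$ self-similar: the run must solve a copy of $G_{j-1}$, then make three switches at level $j$, then solve $G_{j-1}$ again with the roles of the two exits reversed, giving $T(j)=2T(j-1)+3=2^{j+1}-3$. Your invariant is also stronger than what actually holds: the paper does not prove that exactly one bi-switch is offered at each step, but rather that when several are offered, all but one are \emph{irrelevant} (omitting them cannot change the subsequent course or can only lengthen it); your claim that ``applying any nonempty subset advances the counter by at most one'' would additionally require ruling out that a partial application lands in a configuration from which the run can shortcut, which is exactly the case analysis the paper carries out.

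Your final paragraph is moreover contradicted by the paper: the generalized symmetric iteration solves the basic family $(G_n)$ \emph{quickly}, because it admits strictly more improving moves than the counterstrategy filter, and the paper has to replace each pair $a_i,d_i$ by a new gadget (the nodes $c_i,m_i,e_i,f_i,g_i,h_i,k_i,l_i$) to recover an exponential bound of $7\cdot 2^{n-1}-5$ iterations. So ``the very same family witnesses the lower bound for the generalization'' is false, although this concerns the follow-up theorem rather than the statement under review.
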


It is remarkable that the result holds for any improvement rule.
This is different from regular strategy improvement, where the existence of a (theoretical) improvement rule for which the algorithm terminates in a linear number of iterations is guaranteed, see \cite[Lemma~4.2]{Friedmann2011ExponentialPrograms}.
By our Theorem~\ref{thm:worstcase}, this does not exist for symmetric strategy iteration. 

Moreover, we present a generalization of SSI which uses the valuation directly and not only the strategy of the opponent, allowing for more freedom to potentially overcome the exponential instances. 
However, we strengthen Theorem~\ref{thm:worstcase} with a subtle adapation of the worst-case instances to hold also for the generalization.  
This suggests that one needs a different approach involving more than only local information to benefit from insights in the interplay of strategies for both players. 

\subparagraph{Technical overview} To arrive at our main result, we derive a class of games for which SSI needs exponential running time. 
It is a careful adaptation (depicted in Figure~\ref{fig:sink_main}) of the basic example from \cite{Bjorklund2007AGames, gurvich1988cyclic} in such a way that the two players are both distracted by the other player's strategy. 
The key insight here is that the optimal counterstrategy to a bad strategy can also be a bad strategy itself. 
Hence restricting to moves from the optimal counterstrategy prevents them from making the crucial switches for achieving actual progress. 

Our family of games has a self-similar structure.
It requires the algorithm to solve a subgame first, and then after the important switches are made, solve the same subgame again, leading eventually to the exponential blowup of the number of iterations. 

Recall that symmetric strategy iteration picks only edges of the optimal counter strategy. 
While this implicitly also uses the valuation, as it is defined via the subgraph arising from the optimal counterstrategy, the generalization directly compares the valuations of the nodes. 
Only those edges are considered, which provide a local improvement over the valuations of both players.

To derive the lower bound construction for the generalization, we introduce a new gadget (Figure~\ref{fig:relsymmod}). 
This replaces each of the iteratively arranged pairs of nodes in the former family. 
The gadget then forces the generalization to exhibit a similar behaviour as the original SSI.

\subparagraph{Paper organization}
We provide the necessary background on parity games and (symmetric) strategy improvement in Section~\ref{sec:basics}. 
We introduce generalized symmetric strategy improvement in Section~\ref{sec:generalized-SSI}. 
Then, Section~\ref{sec:counterexample-basic} presents the structure and iterations of the basic exponential instance for SSI. 
This is refined in Sections~\ref{sec:adaptedexample} and~\ref{sec:concludingproof} to the generalized version of SSI and arbitrary improvement rules. 
We conclude with a discussion of potential extensions and limitations of our construction.

\section{Preliminaries}
\label{sec:basics}
\subparagraph{Parity games}
A parity game is a game played between two players, called player 0 and player 1. 
It is played on the nodes of a directed graph $G=(V,E)$, where the nodes are partitioned into $V=V_0\cup V_1$, and $V_i$ is controlled by player $i$. 
We assume every node has at least one outgoing edge. 
Associated with the nodes of the graph is a priority function $p \colon V\to Q$, where $Q\subset \mathbb{Z}$. 
At the start of the game, a pebble is placed on one of the nodes. 
A move is made by the player controlling the node that the pebble is currently on, and it consists of this player choosing an outgoing edge from this node. 
The pebble moves along the edge to the next node. 
The players keep making moves indefinitely. 
The winner of the game is decided by the largest node priority that the pebble encounters infinitely often. 
If it is even, player 0 wins, and if it is odd, player 1 wins.

It is well known that parity games are positionally determined, meaning that there always is a player that has a positional winning strategy. 
Positional means here that one only takes into account on which node the pebble currently is. 
Therefore, we define a strategy for player 0 as a function $\sigma \colon V_0\to V$ (with the condition that $(v,\sigma(v))\in E$ for all nodes $v \in V_0$). 
Similarly, a player 1 strategy is a function $\tau:V_1\to V$ (with $(v,\tau(v))\in E$ for all nodes $v \in V_1$).

\begin{figure}[t]
    \centering
    \includegraphics[width=0.8\linewidth]{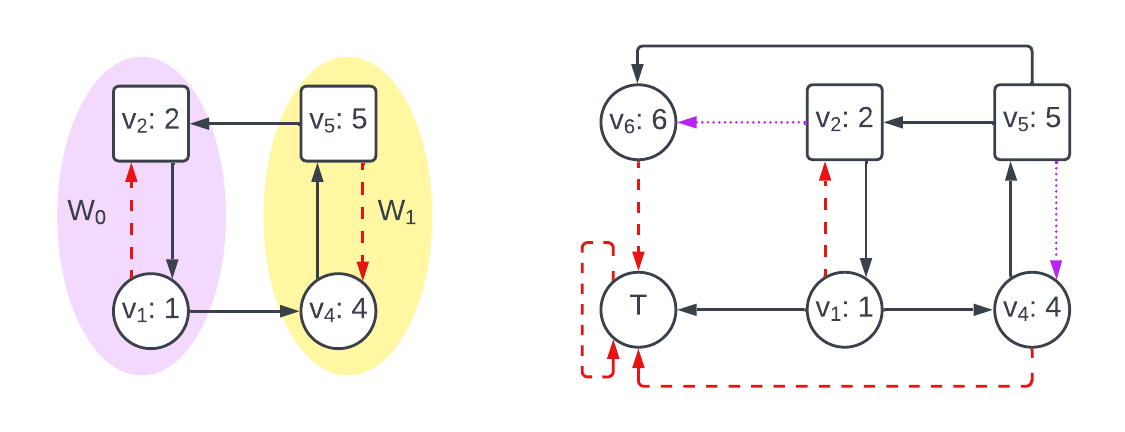}
    \caption{Left: A parity game. Priorities are shown in the nodes. Nodes controlled by player 0 are shown as circles, and nodes controlled by player 1 are squares. The sets of winning starting nodes for player 0 and 1 are $W_0$ and $W_1$, and the winning strategies are marked by dashed lines. Right: A sink parity game with a strategy $\sigma$ (dashed) and its optimal counterstrategy $\bar{\sigma}$ (dotted).}
    \label{fig:parityexample}
\end{figure}
\subparagraph{Sink parity games}
In this paper, we restrict ourselves to a class of parity games called \emph{sink parity games} as it allows to simplify the arguments for valuations.  
This class has often been used to show lower bounds for parity game algorithms (see e.g. \cite{Friedmann2011ExponentialPrograms, Hansen2012Worst-caseMethod}). 
Solving sink parity games is as hard as solving any parity game, see Lemma~\ref{lem:reducetosink}.
\begin{definition}
\label{def:sink}
A \emph{sink parity game} is a parity game that fulfills the following conditions:
\begin{enumerate}
    \item There exists a so-called \emph{sink node} $\top$ for which $p(\top)<p(v)$ for all nodes in $V\backslash \{\top\}$, and whose only outgoing edge is $(\top,\top)$.
    \item There exists a player 0 strategy $\sigma$ such that, when it is played, the highest priority (except~$\top$) in any possible cycle is even.
    \item There exists a player 1 strategy $\tau$ such that, when it is played, the highest priority (except~$\top$) in any possible cycle is odd.
\end{enumerate}
\end{definition}

We call player 0 and player 1 strategies that satisfy the above conditions \textit{admissible}. 
If player 0 plays an admissible strategy $\sigma$ and player 1 plays an admissible strategy $\tau$, then the pebble must end up at $\top$. 
Otherwise, it would enter the same node outside $\top$ twice, which would create a cycle, and the highest priority of the cycle would have to be odd and even at the same time. This also implies that $\top$ is reachable for the other player when $\sigma$ or $\tau$ is played.
One may say that the result of best play in a sink parity game is a `tie'.

\subparagraph{Valuations of strategies}
In the remainder of this section, we describe (symmetric) strategy improvement in parity games. We use the valuation of \cite{Fearnley2017EfficientGames} and \cite{Luttenberger2008StrategyGames}, adapted to sink parity games. The advantages of using this framework are that it is simpler to explain, and it focuses on the crucial second component of the most commonly used valuation by Jurdzinksi and V\"oge~\cite{Jurdzinski2000AGames}. In sink parity games and their related mean/discounted payoff games, the valuation that we describe here is equivalent to the other used versions of strategy improvement for parity games and mean payoff games \cite{Bjorklund2007AGames, Jurdzinski2000AGames, Puri1995TheorySystems}.

Now, suppose player 0 and player 1 both fix a strategy $\sigma$ and $\tau$, respectively. Then, given a starting node $v$, the course of the game is fixed. We want to evaluate how `good' this outcome is for player 0. This is expressed in the \textit{play value} $\Theta_{\sigma, \tau}(v)$. If the pebble does not reach $\top$, then it must eventually follow some cycle. 
If the highest priority in the cycle is even, then this is very good for player 0, so we assign $\infty$ to this play. If the priority is odd, we assign it value $-\infty$. Otherwise, the pebble follows a path to $\top$. 
In this case we establish the play value by counting how often each priority on this path is encountered. 
The even player aims to encounter many high even priorities and little high odd priorities on this path. 
The following definition formalizes this.

\begin{definition}
Let $\sigma$ and $\tau$ be a player 0 and player 1 strategy, respectively. Their \emph{play value} is a function $\Theta_{\sigma, \tau}:V\to \mathbb{Z}_{\geq 0}^Q\cup\{-\infty, \infty\}$, with $\mathbb{Z}_{\geq 0}^Q$ the set of nonnegative integer vectors indexed by the priority set $Q$. It is defined as follows:
\begin{itemize}
    \item Suppose the nodes encountered are $v=v_1,v_2,\ldots, v_k, \top, \top, \ldots$ (with $v_k\neq \top$). Then for any $q\in Q$, the $q$-element of $\Theta_{\sigma,\tau}$ (the component of the vector in $\mathbb{Z}_{\geq 0}^Q$ indexed by $q$) is given by
    $\left(\Theta_{\sigma,\tau}(v)\right)_q=|\{j\leq k: p(v_j)=q\}|$.
    \item If $\top$ is not reached and player 0 wins, then $\Theta_{\sigma,\tau}=\infty$.
    \item If $\top$ is not reached and player 1 wins, then $\Theta_{\sigma,\tau}=-\infty$.
\end{itemize} 
\end{definition}
We can compare play values by how `good' they are for player 0. 
This is done by a total order $\tle$ on $\mathbb{Z}_{\geq 0}^Q\cup\{-\infty, \infty\}$. 
The smallest element for $\tle$ is $-\infty$ and the largest is $\infty$. 
The order $\tle$ compares the play values in $\mathbb{Z}_{\geq 0}^Q$ lexicographically, but different for even and odd indices. To be precise, suppose we have $B$ and $C$ in $\mathbb{Z}_{\geq 0}^Q$, and that $B\neq C$. Let $q'$ be the highest priority $q$ for which $B_{q'}\neq C_{q'}$. We then say that $B\tl C$ if either $B_{q'} < C_{q'}$ and $q'$ is even, or $B_{q'} > C_{q'}$ and $q'$ is odd.
So if $B\tl C$, then $B$ has either less of some high even priority or more of some high odd priority, so $B$ is `worse' for player 0. 
Then we use the concept of play value to evaluate strategies. 
We evaluate a player 0 admissible strategy $\sigma$ by an optimal player 1 response $\bar{\sigma}$. 
To be precise, the valuation of an admissible strategy $\sigma$ is a function $\Xi_{\sigma}:V\to \mathbb{Z}_{\geq 0}^Q$ given by:
$\Xi_{\sigma}(v)=\min_{\tle}\{\Theta_{\sigma, \tau}(v)\:|\: \tau \text{ player 1 strategy}\}=\Theta_{\sigma, \bar{\sigma}}(v)$

A strategy $\tau$ attaining the minimum in the above equation is an \emph{(optimal) counterstrategy}.
Note that there exists a player 1 strategy $\bar{\sigma}$ that simultaneously attains the minimum for all nodes in $V$ (equation (11) in \cite{Jurdzinski2000AGames}). In general, this strategy might not be unique, but we pick one arbitrarily to be $\bar{\sigma}$ if there are multiple options.
Moreover, an optimal response $\bar{\sigma}$ can be computed efficiently. Because we have a sink parity game and an admissible $\sigma$, the minimum is never $\infty$ or $-\infty$, so we can regard the range of $\Xi_\sigma$ as $\mathbb{Z}_{\geq 0}^Q$.

The valuation of an admissible player 1 strategy $\tau$ is defined similarly using an optimal response $\bar{\tau}$ from player 0: $\Xi_{\tau}(v)=\max_{\tle}\{\Theta_{\sigma, \tau}(v)\:|\: \sigma \text{ player 0 strategy}\}=\Theta_{\bar{\tau}, \tau}$.

\smallskip

Figure~\ref{fig:parityexample} shows an example of a sink parity game with a strategy and its counterstrategy. 
For example, we have $\Xi_{\sigma}(v_2)=(0,1,0,0,0,1)$, as the play resulting from $\sigma$ and its optimal counterstrategy $\bar{\sigma}$ goes through the nodes with priorities 2 and 6. Likewise, \break $\Xi_{\sigma}(v_4)=(0,0,0,1,0,0)$, hence $\Xi_{\sigma}(v_2)\tr \Xi_{\sigma}(v_4)$.

\subparagraph{Strategy improvement}
The core idea behind strategy improvement is to make so-called improving moves. 
Improving moves for player 0 are given by the edges $(v,v')$ with $v\in V_0$ for which $\Xi_{\sigma}(v')\tr \Xi_{\sigma}(\sigma(v))$. 
That means that, with a new strategy picking $v'$ after $v$, player 0 can send the pebble to a node with higher valuation than it currently does in $\sigma$. 
We denote the set of improving moves for $\sigma$ by $I_{\sigma}$. 
Player 0 creates a new strategy $\sigma'$ from $\sigma$ by making improving moves, which means $\sigma'(v)=v'$ for a number of improving moves $(v,v') \in I_{\sigma}$ and $\sigma'(v)=\sigma(v)$ everywhere else. Of course, there may be multiple improving edges per node. The choice which edges to use to improve is decided by an \textit{improvement rule}, which is a function $f:\mathcal{P}(E)\to \mathcal{P}(E)$ that takes as input a set of improving edges, and outputs a set of improving edges subject to the following conditions:
\begin{itemize}
    \item If $|S|>0$, then $|f(S)|>0$.
    \item $f(S)\subseteq S$ for all $S\in \mathcal{P}(E)$. 
    \item Every node has at most one outgoing edge in $f(S)$.
\end{itemize}
By \cite{Jurdzinski2000AGames}, for any choice of improving edges, we have $\Xi_{\sigma'}(v)\tre \Xi_{\sigma}(v)$ for every $v\in V$, and $\Xi_{\sigma'}(v)\tr \Xi_{\sigma}(v)$ for at least one node $v$. Hence we increase the valuation of the strategy. Clearly, the new strategy is also admissible as its valuation is not $-\infty$. 
Additionally, if $\sigma$ has no improving moves, then we know that $\Xi_{\sigma}$ is pointwise maximal in the space of valuations (\cite[Lemma~5.8]{Jurdzinski2000AGames}). This leads to the strategy improvement algorithm (Algorithm~\ref{alg:stratit}).

\begin{algorithm}[t]
\caption{Strategy improvement}
\label{alg:stratit}
\begin{algorithmic}[1]
\State Start with some admissible strategy $\sigma$
\State Find an optimal counterstrategy $\bar{\sigma}$ to $\sigma$, compute $\Xi_{\sigma}$ and $I_{\sigma}$
\If{$I_{\sigma}=\emptyset$} \Return $\sigma$
\Else
\State Let $\sigma'$ be the strategy obtained from $\sigma$ by applying all improving moves from $f(I_{\sigma})$
\State $\sigma \gets \sigma'$, \textbf{go to} 2
\EndIf
\end{algorithmic}
\end{algorithm}

We can also define improving moves for player 1, by saying $(v,v')$ with $v\in V_1$ is an improving move if $\Xi_{\tau}(v')\tl \Xi_{\tau}(\tau(v))$. 
We denote the set of improving moves for player 1 by~$I_{\tau}$. 
Similar to before, if $\tau'$ is obtained from $\tau$ by making improving moves, $\Xi_{\tau'}(v)\tle \Xi_{\tau}(v)$ for every $v\in V$, and $\Xi_{\tau'}(v)\tl \Xi_{\tau}(v)$ for at least one node $v$. 
It is well-known that, if $\sigma$ is an optimal player 0 strategy (maximizing $\Xi_{\sigma}$ pointwise) and $\tau$ and an optimal player 1 strategy (minimizing $\Xi_{\tau}$ pointwise), then $\Xi_{\sigma}=\Xi_{\tau}$.

Now why are we interested in finding the strategy $\sigma$ that yields the highest valuation $\Xi_{\sigma}$ in a sink parity game? 
The winner of a sink parity game is already known, since the best both players can do is go to the sink node $\top$. 
However, even in a sink parity game, finding the optimal strategy (yielding the $\tle$-best $\Xi_{\sigma}$) is as difficult as solving parity games, as noted before in \cite{Friedmann2013AMemorization}. It is similar to the reduction to the longest shortest path problem in~\cite{Bjorklund2007AGames} and escape games in~\cite{DBLP:conf/csl/Schewe08}.

\begin{lemma}
\label{lem:reducetosink}
Deciding the winning starting sets and the winning strategies in a parity game can be polynomial-time reduced to finding a player 0 strategy $\sigma$ in a sink parity game that maximizes $\Xi_{\sigma}$.
\end{lemma}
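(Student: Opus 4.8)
The plan is to give a polynomial-time construction that turns an arbitrary parity game $G=(V,E,p)$ into a sink parity game $G'$, together with a rule reading the winning sets and winning strategies of $G$ off any player~0 strategy of $G'$ that maximizes $\Xi$. First I would build $G'$ as follows. Shift all priorities to be positive, add a sink $\top$ with a fresh smallest priority $0$ and self-loop, and add two global escape nodes $e_{\mathrm{odd}}$ and $e_{\mathrm{even}}$ whose only outgoing edge is to $\top$ and whose priorities are $2M+1$ and $2M+2$, where $2M$ bounds the original priorities (so both exceed every original priority, one odd and one even). Every original edge $(u,v)$ is subdivided by two fresh control nodes as $u\to c_{uv}\to d_{uv}\to v$, where $c_{uv}\in V_0$ gets the extra edge $c_{uv}\to e_{\mathrm{odd}}$, $d_{uv}\in V_1$ gets the extra edge $d_{uv}\to e_{\mathrm{even}}$, and both control nodes receive a priority strictly below all original ones (but above $\top$). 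This adds $O(|E|)$ nodes and keeps priorities polynomially bounded.

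Next I would check that $G'$ is a genuine sink parity game. The strategy that escapes to $e_{\mathrm{odd}}$ at every $c_{uv}$ drives the pebble to $\top$ within two steps against any opponent response, hence is admissible for player~0; symmetrically, escaping to $e_{\mathrm{even}}$ at every $d_{uv}$ is admissible for player~1. The point of the subdivision is that on every edge each player owns a control node at which they may abandon the play to $\top$, so either player can break any cycle of the wrong parity. Together with $p(\top)=0$ being strictly smallest, this verifies Definition~\ref{def:sink}.

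Now let $\sigma^*$ be the player~0 strategy returned by a maximizer oracle on $G'$, and compute an optimal counterstrategy $\tau^*=\overline{\sigma^*}$ in polynomial time. I claim $v\in W_0$ in $G$ iff the largest priority occurring in $\Xi_{\sigma^*}(v)$ is even. If $v\in W_0$, player~0 can play a winning strategy of $G$ at the original nodes and never escape through $e_{\mathrm{odd}}$; then any play avoiding $\top$ forms an even-dominated cycle, valued $\infty$ and thus rejected by the minimizing $\tau^*$, so $\tau^*$ is forced through $e_{\mathrm{even}}$, making $2M+2$ the dominant (even) priority of $\Xi_{\sigma^*}(v)$. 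If $v\in W_1$, player~1 can force odd-dominated cycles, so player~0 cannot reach $\infty$ and the only remaining route to $\top$ is through $e_{\mathrm{odd}}$, making $2M+1$ the dominant (odd) priority. Reading this parity off $\Xi_{\sigma^*}$ recovers $W_0$ and $W_1$; projecting $\sigma^*$ to the original moves on $V_0\cap W_0$ yields a winning strategy for player~0, and projecting $\tau^*$ on $V_1\cap W_1$ yields one for player~1, using that an optimal $\sigma^*$ and $\tau^*=\overline{\sigma^*}$ satisfy $\Xi_{\sigma^*}=\Xi_{\tau^*}$.

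The routine parts are the polynomial size bound, admissibility, and the strategy projection. The main obstacle is the correspondence itself: I must show that the control gadgets preserve the original winners (no spurious cycle changes who wins, which holds because the control nodes carry subordinate priorities and the extra edges only ever lead to $\top$) and, most delicately, that at the optimum the losing player is \emph{forced} to reveal the loss through the high escape while the winning player is \emph{not}. This requires a careful argument about the minimizing best response $\tau^*$: that it strictly prefers the finite high-even escape over $\infty$ on $W_0$, yet never rescues player~0 through $e_{\mathrm{even}}$ on $W_1$. Establishing that the single maximizer $\sigma^*$ simultaneously encodes winning strategies for both players is the crux, and it is exactly here that the order $\tle$ with its reversed treatment of even and odd priorities, combined with the fact that $e_{\mathrm{odd}}$ and $e_{\mathrm{even}}$ dominate all original priorities, does the work.
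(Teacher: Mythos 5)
Your construction is correct and follows essentially the same route as the paper: append a sink plus high-priority penalty escape node(s) so that only the losing player is forced through a dominating priority of the wrong parity, then read the winning sets off the top component of $\Xi_{\sigma^*}$ and project $\sigma^*$ and $\overline{\sigma^*}$ back to the original game. The only substantive difference is in the gadget: the paper attaches escapes at the players' own nodes (directly to $\top$ for player 0, through a single high-even node $w$ for player 1) and therefore must first eliminate cycles lying entirely in $V_0$ or $V_1$, whereas your edge-subdivision with two control nodes per edge makes the construction symmetric and renders that preprocessing step unnecessary.
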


\begin{claimproof}
Suppose we have a parity game $G=(V=V_0\cup V_1,E)$ with priority function $p:V\to \mathbb{Z}$. We may assume that there are no cycles within $V_0$ or within $V_1$. This is because we can always add nodes with small priorities controlled by player 0 in the middle of cycles in $V_1$ and vice versa without affecting the outcome of the game. Now we construct a parity game $G'=(V',E')$ from $G$, by adding two extra nodes, $\top$ and $w$. We extend $p$ by choosing $p(\top)$ smaller than all other priorities, and taking for $p(w)$ an even number higher than all other priorities. We add an edge from every node in $V_0$ to $\top$, from every node in $V_1$ to $w$, from $w$ to $\top$ and from $\top$ to $\top$.

\begin{figure}[t]
    \centering
    \includegraphics[width=0.6\linewidth]{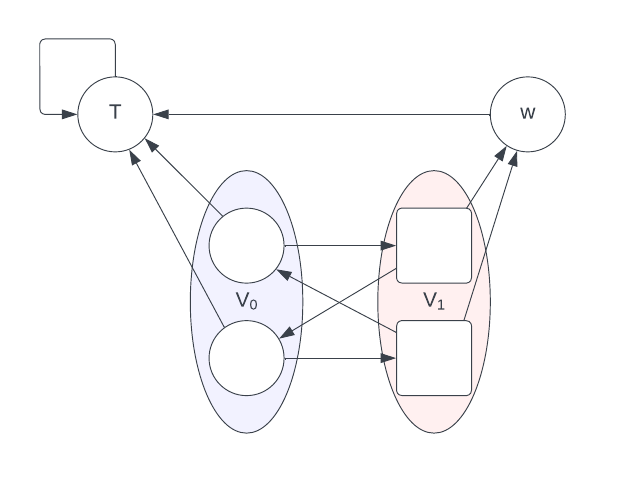}
    \caption{Reduction to a sink parity game}
    \label{fig:sinkgamereduction}
\end{figure}
This is clearly a sink parity game, since player 0 has an admissible strategy by always going to $\top$, and player 1 has an admissible strategy by going to $w$. Now let $\sigma$ be an optimal player 0 strategy that maximizes $\Xi_{\sigma}$ pointwise, and let $\bar{\sigma}$ be player 1's optimal response. Since $p(w)$ is even and very large, if player 1 can avoid entering $\top$ through $w$, they will do so. Define the subgraph $G'_{\sigma}$ by the graph with the same node set as $G'$ and with edge set $\{(v,\sigma(v)): v\in V_0'\}\cup \{(v,v')\in E': v\in V_1'\}$. Suppose for a node $v$ that $\left(\Xi_{\sigma}(v)\right)_{p(w)}=1$. This implies that in $G'_{\sigma}$, the node $\top$ is only reachable from $v$ through $w$. In particular, this means that player 1 would always have to end in a cycle in $V$ if they would not have the option of going to $w$. Because $\sigma$ is admissible, this cycle has an even highest priority. This implies that, in the original game $G$, player 0 wins the game that starts at $v$ by playing $\sigma$. If, on the other hand, $\left(\Xi_{\sigma}(v)\right)_{p(w)}=0$, this implies that $\left(\Xi_{\tau}(v)\right)_{p(w)}=0$, and we can argue in the same way that player 0 can only reach cycles of odd priority in $V$ if player 1 plays $\tau$. Hence $\tau$ wins for player 1 in the parity game $G$ that starts from $v$. So we found the winning starting sets and winning strategies for both players in $G$. (note that we could also have made the above construction with $p(w)$ odd and connecting player 0 nodes to $w$ and player 1 nodes with $\top$). See also Example \ref{ex:reduce}. 
\end{claimproof}

\subparagraph{Symmetric strategy improvement} \label{symmetricalgorithm}
The symmetric strategy improvement (SSI) algorithm was introduced by Schewe et al. in~\cite{Schewe2015SymmetricImprovement} as a symmetric version of strategy improvement. 
The algorithm maintains and improves two strategies simultaneously: a player 0 strategy $\sigma$ and a player 1 strategy $\tau$. 
It uses an optimal counterstrategy $\bar{\tau}$ to $\tau$ to select improving moves for $\sigma$, and an optimal counterstrategy $\bar{\sigma}$ to $\sigma$ to select improvements for $\tau$. 
Note that this could be applied to a broader class of games than just (sink) parity games, in particular mean and discounted payoff games. 
It is described in Algorithm~\ref{alg:symstratit} where we expicitly include the choice of an improvement rule  $f \colon \mathcal{P}(E) \to \mathcal{P}(E)$. 

\begin{algorithm}[t]
\caption{Symmetric strategy improvement}
\label{alg:symstratit}
\begin{algorithmic}[1]
\State Start with some pair of admissible strategies $\sigma$,$\tau$
\State Find counterstrategies $\bar{\sigma}$ and $\bar{\tau}$ and compute $I_{\sigma}$ and $I_{\tau}$
\State $I\gets f\left((I_{\sigma}\cap \{(v,\bar{\tau}(v))| v\in V_0\})\cup (I_{\tau}\cap \{(v,\bar{\sigma}(v))| v\in V_1\}))\right)$
\State Let $\sigma',\tau'$ be result of applying all improving moves from $I$ to $\sigma, \tau$
\If{$\sigma=\sigma'$ \textbf{ and } $\tau=\tau'$} \textbf{return }$\sigma, \tau$
\Else
\State $\sigma\gets \sigma'$, $\tau \gets\tau'$, \textbf{go to} 2
\EndIf
\end{algorithmic}
\end{algorithm}

It is clear that the algorithm terminates, since any improving move improves the valuation for the respective player, and there is only a finite number of strategies (having a fixed valuation) for both players. 
The following lemma implies that the algorithm only terminates when the resulting pair of strategies $(\sigma,\tau)$ is optimal for the players. 
It is implicitly proven in \cite[Lemma~3]{Schewe2015SymmetricImprovement}.

\begin{lemma} \label{lem:symworks}
Suppose $\sigma$ is a non-optimal player 0 strategy or $\tau$ is a non-optimal player 1 strategy. 
Let $\bar{\sigma}$ and $\bar{\tau}$ be optimal counterstrategies to $\sigma$ and $\tau$, respectively. Then at least one of the sets $I_{\sigma}\cap \{(v,\bar{\tau}(v))| v\in V_0\}$ and $I_{\tau}\cap \{(v,\bar{\sigma}(v))| v\in V_1\}$ is nonempty.
\end{lemma}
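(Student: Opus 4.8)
The plan is to prove the contrapositive: assuming both sets $I_{\sigma}\cap \{(v,\bar{\tau}(v))\mid v\in V_0\}$ and $I_{\tau}\cap \{(v,\bar{\sigma}(v))\mid v\in V_1\}$ are empty, I will show that \emph{both} $\sigma$ and $\tau$ are optimal. Writing $\mathbf{e}_q$ for the unit vector supported on priority $q$, I first record a reduction that makes the statement tractable. By the definition of the valuations as a $\tle$-minimum (resp. $\tle$-maximum) over counterstrategies we always have $\Xi_{\sigma}(v)\tle \Theta_{\sigma,\tau}(v)\tle \Xi_{\tau}(v)$, hence $\Xi_{\sigma}\tle \Xi_{\tau}$ pointwise; and writing $\Xi^{*}$ for the common optimal valuation (the text notes $\Xi_{\sigma^{*}}=\Xi_{\tau^{*}}$), we have $\Xi_{\sigma}\tle \Xi^{*}\tle \Xi_{\tau}$, with $\sigma$ optimal iff $\Xi_{\sigma}=\Xi^{*}$ and $\tau$ optimal iff $\Xi_{\tau}=\Xi^{*}$, using that $I_{\sigma}=\emptyset$ characterises maximality. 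Consequently it suffices to establish the single identity $\Xi_{\sigma}=\Xi_{\tau}$, since this pins both to $\Xi^{*}$ at once.

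Next I translate the two emptiness hypotheses into local inequalities. Emptiness of the first set means that for every $v\in V_0$ the edge $(v,\bar\tau(v))$ is not improving, i.e. $\Xi_{\sigma}(\bar\tau(v))\tle \Xi_{\sigma}(\sigma(v))$; emptiness of the second means that for every $v\in V_1$ one has $\Xi_{\tau}(\bar\sigma(v))\tre \Xi_{\tau}(\tau(v))$. I combine these with the definitional one-step recurrences $\Xi_{\sigma}(v)=\mathbf{e}_{p(v)}+\Xi_{\sigma}(\sigma(v))$ for $v\in V_0$ and $\Xi_{\sigma}(v)=\mathbf{e}_{p(v)}+\Xi_{\sigma}(\bar\sigma(v))$ for $v\in V_1$ (and symmetrically for $\Xi_{\tau}$, with $\sigma,\bar\sigma$ replaced by $\bar\tau,\tau$), together with the elementary fact that $\tle$ is invariant under adding a fixed $\mathbf{e}_q$ to both sides, which is immediate from its lexicographic definition. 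Now consider the single play $\Pi$ in which player $0$ follows $\bar\tau$ and player $1$ follows $\bar\sigma$. A one-line calculation at each node then shows that along every step $v\to v^{+}$ of $\Pi$ one has $\Xi_{\sigma}(v)\tre \mathbf{e}_{p(v)}+\Xi_{\sigma}(v^{+})$ and, in the \emph{opposite} direction, $\Xi_{\tau}(v)\tle \mathbf{e}_{p(v)}+\Xi_{\tau}(v^{+})$. Note that neither hypothesis alone would force optimality; it is the interplay of these two opposite-direction inequalities that will do the work.

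The core of the argument is to telescope these inequalities along $\Pi$. If $\Pi$ started at $v$ reaches $\top$, telescoping yields $\Xi_{\sigma}(v)\tre \Theta_{\bar\tau,\bar\sigma}(v)$ and $\Xi_{\tau}(v)\tle \Theta_{\bar\tau,\bar\sigma}(v)$, which together with $\Xi_{\sigma}(v)\tle \Xi_{\tau}(v)$ pinches $\Xi_{\sigma}(v)=\Xi_{\tau}(v)=\Theta_{\bar\tau,\bar\sigma}(v)$ — exactly the desired identity. The main obstacle is ruling out that $\Pi$ gets trapped in a cycle, since $\bar\tau$ and $\bar\sigma$ are best responses to different strategies and their joint play need not a priori reach $\top$. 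I expect to dispose of this by telescoping around any such cycle $C$: summing the $\Xi_{\sigma}$-inequalities gives $\Xi_{\sigma}(u)\tre \mathbf{e}_{p(C)}+\Xi_{\sigma}(u)$ for $u\in C$, where $\mathbf{e}_{p(C)}$ is the nonzero nonnegative vector recording the priorities on $C$, and by the definition of $\tle$ this forces the highest priority on $C$ to be odd; summing the $\Xi_{\tau}$-inequalities around $C$ forces it, symmetrically, to be even — a contradiction. Hence $\Pi$ reaches $\top$ from every node, so $\Xi_{\sigma}=\Xi_{\tau}$ everywhere, both strategies are optimal, and the contrapositive is established.
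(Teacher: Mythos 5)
Your argument is correct. The paper itself supplies no proof of Lemma~\ref{lem:symworks}; it only remarks that the statement is implicitly contained in Lemma~3 of Schewe et al., so there is no in-paper argument to compare against line by line. What you give is a complete, self-contained verification in this paper's sink-parity-game framework, and every step checks out: the translation of the two emptiness hypotheses into the non-improvement inequalities $\Xi_{\sigma}(\bar\tau(v))\tle\Xi_{\sigma}(\sigma(v))$ and $\Xi_{\tau}(\bar\sigma(v))\tre\Xi_{\tau}(\tau(v))$ is right; the one-step recurrences hold because $\Xi_{\sigma}=\Theta_{\sigma,\bar\sigma}$ and $\Xi_{\tau}=\Theta_{\bar\tau,\tau}$ and both strategies are admissible (so the valuations are finite vectors); $\tle$ is indeed translation-invariant, so the telescoping along the play under $(\bar\tau,\bar\sigma)$ is legitimate; the cycle case is correctly excluded by the parity clash (the $\Xi_{\sigma}$-telescope forces the top priority of the cycle to be odd, the $\Xi_{\tau}$-telescope forces it to be even); and the pinching $\Theta_{\bar\tau,\bar\sigma}(v)\tle\Xi_{\sigma}(v)\tle\Xi_{\tau}(v)\tle\Theta_{\bar\tau,\bar\sigma}(v)$ together with $\Xi_{\sigma}\tle\Xi^{*}\tle\Xi_{\tau}$ gives optimality of both strategies. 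Two cosmetic remarks: the aside that ``$I_{\sigma}=\emptyset$ characterises maximality'' is not actually needed, since optimality is defined as pointwise maximality of $\Xi_{\sigma}$; and you should say explicitly that the only cycle the play $(\bar\tau,\bar\sigma)$ could be trapped in is one avoiding $\top$ (the self-loop at $\top$ counts as reaching $\top$, and the step inequalities are only invoked at nodes other than $\top$). Neither affects correctness.
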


\section{Generalized symmetric strategy improvement}
\label{sec:generalized-SSI}

It was an intriguing insight by Schewe et al. how the interplay between strategies of both players can be used to overcome the known hard instances of strategy improvement. 
Since the evaluation of the goodness of a strategy relies on the valuation, we go one step further and directly use the interplay between the valuations arising from the strategies of the two players to make the improvements. 
While this can overcome the first basic family of hard instances presented in Section~\ref{sec:counterexample-basic}, we show in Section~\ref{sec:adaptedexample} that actually both versions can still be forced to take exponentially many steps. 
In this way, we extend the result of Theorem~\ref{thm:worstcase}.

Our selection of improving edges contains the original set but is bigger in general. Instead of basing the choices just on optimal counterstrategies, we construct sets $J_{\sigma}(\tau)$ and $J_{\tau}(\sigma)$ by directly comparing locally the valuations of adjacent nodes: 
\begin{align*}
J_{\sigma}(\tau) &=\{(v,w):\; v\in V_0\wedge\Xi_{\tau}(w)\tre \Xi_{\tau}(\sigma(v))\}, \\
J_{\tau}(\sigma) &=\{(v,w):\; v\in V_1\wedge\Xi_{\sigma}(w)\tle \Xi_{\sigma}(\tau(v))\}.
\end{align*}
These notions allow us to state our generalized version in Algorithm \ref{gssi}. 
They are in some sense the counterparts of $I_{\sigma}$ and $I_{\tau}$, recall that $I_{\sigma}=\{(v,w):\; v\in V_0\wedge\Xi_{\sigma}(w)\tr \Xi_{\sigma}(\sigma(v))\}$. 
One obtains the original SSI back in this context by choosing only those improvement rules that select only edges used by $\bar{\sigma}$ and $\bar{\tau}$.  

\begin{algorithm}[t]
\caption{Generalized symmetric strategy improvement}
\label{gssi}
\begin{algorithmic}[1]
\State Start with some admissible strategies $\sigma$ and $\tau$
\State Find $I_{\sigma}$, $I_{\tau}$, $J_{\sigma}(\tau)$ and $J_{\tau}(\sigma)$ 
\State $I\gets f\left(\left(I_{\sigma}\cap J_{\sigma}(\tau)\right)\cup\left(I_{\tau}\cap J_{\tau}(\sigma)\right)\right)$
\State Let $\sigma',\tau'$ be result of applying all improving moves from $I$ to $\sigma, \tau$
\If{$\sigma=\sigma'$ \textbf{ and } $\tau=\tau'$} \textbf{return }$\sigma, \tau$
\Else  
\State $\sigma\gets\sigma'$, $\tau\gets\tau'$, \textbf{go to }2
\EndIf
\end{algorithmic}
\end{algorithm}

\subparagraph{Correctness of generalized symmetric strategy improvement}\label{correct}
Like for normal SSI, it is clear that the algorithm terminates. 
We are left to show that there is always an improving move possible if the pair of strategies is not optimal. 
We do so by showing that all the improving moves that were possible in SSI are also possible in the generalization. 
Correctness of Algorithm~\ref{gssi} then follows from Lemma~\ref{lem:symworks}. 
\begin{lemma}
For any pair of strategies $\sigma,\tau$, we have that any edge $(v,\bar{\sigma}(v))$ is in $J_{\tau}(\sigma)$, and any edge $(v,\bar{\tau}(v))$ is in $J_{\sigma}(\tau)$.  
\end{lemma}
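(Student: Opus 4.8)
The plan is to unfold the two claims into the defining inequalities of $J_{\tau}(\sigma)$ and $J_{\sigma}(\tau)$ and then read them off from a local optimality property of the counterstrategies. Since $\bar{\sigma}$ is a player~1 strategy, the edge $(v,\bar{\sigma}(v))$ has $v\in V_1$, and by definition it lies in $J_{\tau}(\sigma)$ exactly when $\Xi_{\sigma}(\bar{\sigma}(v))\tle \Xi_{\sigma}(\tau(v))$. Symmetrically, $\bar{\tau}$ is a player~0 strategy, so $(v,\bar{\tau}(v))$ has $v\in V_0$, and membership in $J_{\sigma}(\tau)$ amounts to $\Xi_{\tau}(\bar{\tau}(v))\tre \Xi_{\tau}(\sigma(v))$. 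It therefore suffices to establish these two pointwise inequalities.

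The key step is the local, Bellman-type optimality of an optimal counterstrategy. I would argue that, because $\bar{\sigma}$ minimizes $\Theta_{\sigma,\cdot}$ among all player~1 responses to the fixed strategy $\sigma$, it admits no improving switch in the one-player game obtained by fixing $\sigma$: for every $v\in V_1$ and every successor $w$ of $v$ one has $\Xi_{\sigma}(\bar{\sigma}(v))\tle \Xi_{\sigma}(w)$. Indeed, if some successor $w$ satisfied $\Xi_{\sigma}(w)\tl \Xi_{\sigma}(\bar{\sigma}(v))$, then rerouting $\bar{\sigma}$ at $v$ towards $w$ would strictly decrease the valuation (using that incrementing the count of priority $p(v)$ is monotone with respect to $\tle$, so that minimizing $\Xi_{\sigma}(w)$ over the successors $w$ of $v$ coincides with minimizing the value at $v$), contradicting optimality of $\bar{\sigma}$ as a best response. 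This is the counterpart, for the fixed-$\sigma$ subgame, of the improving-move analysis cited from \cite{Jurdzinski2000AGames}.

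Applying this with the particular successor $w=\tau(v)$, which is a legal successor of $v$ because $\tau$ is a player~1 strategy, yields $\Xi_{\sigma}(\bar{\sigma}(v))\tle \Xi_{\sigma}(\tau(v))$, i.e.\ $(v,\bar{\sigma}(v))\in J_{\tau}(\sigma)$. The second claim follows by the mirror-image argument: $\bar{\tau}$ maximizes $\Theta_{\cdot,\tau}$ against the fixed strategy $\tau$, so at each $v\in V_0$ the choice $\bar{\tau}(v)$ maximizes $\Xi_{\tau}$ over the successors of $v$, giving $\Xi_{\tau}(\bar{\tau}(v))\tre \Xi_{\tau}(w)$ for every successor $w$; specializing to $w=\sigma(v)$ gives $\Xi_{\tau}(\bar{\tau}(v))\tre \Xi_{\tau}(\sigma(v))$, hence $(v,\bar{\tau}(v))\in J_{\sigma}(\tau)$.

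I expect the only genuine obstacle to be making the local-optimality statement fully rigorous, namely justifying that an optimal counterstrategy selects a successor of minimum (respectively maximum) valuation at every controlled node. This is a standard consequence of the fixpoint characterization of the valuation together with the monotonicity of $\tle$ under incrementing a single priority, but it must be invoked for the correct one-player subgame (fixed $\sigma$, respectively fixed $\tau$) rather than for the two-strategy valuations $\Xi_{\tau}$ and $\Xi_{\sigma}$ appearing elsewhere in the algorithm. Once this property is in place, both inclusions are obtained by the direct substitutions $w=\tau(v)$ and $w=\sigma(v)$ above.
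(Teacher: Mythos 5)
Your proposal is correct and follows essentially the same route as the paper: both arguments reduce the claim to the fact that an optimal counterstrategy $\bar{\sigma}$ admits no improving switch in the one-player game $G_{\sigma}$ obtained by fixing $\sigma$ (where the valuation of $\bar{\sigma}$ coincides with $\Xi_{\sigma}$), so that $\Xi_{\sigma}(\bar{\sigma}(v))\tle\Xi_{\sigma}(w)$ for every successor $w$ of $v\in V_1$, and then specialize to $w=\tau(v)$ (and symmetrically for $\bar{\tau}$). The paper merely phrases this as a contradiction -- assuming $(v,\bar{\sigma}(v))\notin J_{\tau}(\sigma)$ and exhibiting $(v,\tau(v))$ as an improving move for $\bar{\sigma}$ in $G_{\sigma}$ -- whereas you state the local Bellman optimality directly; the content is the same.
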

\begin{claimproof}
We do so by contradiction. 
Suppose there is an edge $(v, \bar{\sigma}(v))$ that is not in $J_{\tau}(\sigma)$. This means by definition of $J_{\tau}(\sigma)$ that $\Xi_{\sigma}(\bar{\sigma}(v))\tr \Xi_{\sigma}(\tau(v))$. 
Note that the valuation of $\sigma$, which is $\Xi_{\sigma}$, is equal to $\Theta_{\sigma\bar{\sigma}}$. Now consider the strategy subgraph $G_{\sigma}:=(V,E_{\sigma})$ with $E_{\sigma}:=\{(v,w):v\in V_1\}\cup\{(v, \sigma(v)):v\in V_0\})$. In $G_{\sigma}$, the valuation of $\bar{\sigma}$ is equal to $\Theta_{\sigma\bar{\sigma}}=\Xi_{\sigma}$, as there is only one player 0 strategy possible. But then $\Xi_{\sigma}(\bar{\sigma}(v))\tr \Xi_{\sigma}(\tau(v))$ implies that $(v,\tau(v))$ is an improving move for player 1 strategy $\bar{\sigma}$ in the game $G_{\sigma}$. 
However, this is not possible, since $\bar{\sigma}$ is defined to be an optimal counterstrategy to $\sigma$. We conclude that $(v, \bar{\sigma}(v))\in J_{\tau}(\sigma)$. The proof of $(v, \bar{\tau}(v))\in J_{\sigma}(\tau)$ is analogous.
\end{claimproof}

\section{Counterexample for worst-case complexity}
\label{sec:counterexample-basic}

We present a family of parity games for which the original SSI needs exponentially many iterations, where we restrict to the SWITCH-ALL improvement rule at first. 
This rule selects one improving edge for every node that has an outgoing improving edge in $S$. 
We generalize this to arbitrary improvement rules in Section~\ref{sec:concludingproof}. Some animations of the iterations of the examples presented in Sections \ref{sec:counterexample-basic} and \ref{sec:adaptedexample} can be found at \url{https://github.com/MatthewMaat/SI-animations/tree/master/Example%20animations}

\smallskip

We use a sequence of sink parity games $(G_n)_{n\in \mathbb{N}}$ whose nodes and edges of $G_n$ are listed in Table~\ref{tab:ladder} and Figure~\ref{fig:sink_main}. 
The main structure is similar to the mean payoff game from \cite{Bjorklund2007AGames} and \cite{gurvich1988cyclic}, with the main difference being that we have backward edges to the `start' of the game.

\begin{table}[b]
    \centering
    \begin{tabular}{c|c|c|c}
       Node  &  Player  & Priority  & Successors \\ \hline
       $a_1$  &  Player 0 & 3 &  $a_2$, $d_2$\\
       $a_i$, $i=2,\ldots, n$  &  Player 0 & $2i+1$ &  $a_1$, $a_{i+1}$, $d_{i+1}$\\ 
       $d_1$  &  Player 1 & 4 &  $a_2$, $d_2$\\
       $d_i$, $i=2,\ldots, n$  &  Player 1 & $2i+2$ &  $d_1$, $a_{i+1}$, $d_{i+1}$\\ 
       $a_{n+1}$ & Player 0 & $1$ & $a_{n+1}$\\
       $d_{n+1}$ & Player 1 & $2n+4$ & $a_{n+1}$
    \end{tabular}
    \caption{Nodes and edges of $G_n$}
    \label{tab:ladder}
\end{table}

\begin{figure}[t]
    \centering
    \includegraphics[width=\linewidth]{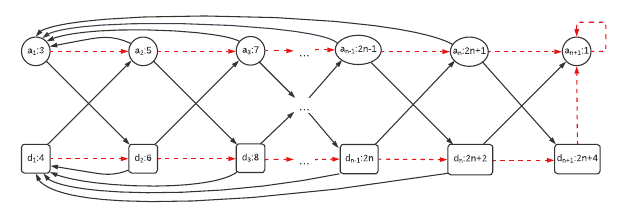}
    \caption{The graph $G_n$, with the priorities written in the nodes. The initial strategies $\sigma_0$ and $\tau_0$ are dashed.}
    \label{fig:sink_main}
\end{figure}

\smallskip

\textit{Initial strategies} We define $\sigma_0$ and $\tau_0$ by $\sigma_0(a_i)=a_{i+1}$ and $\tau_0(d_i)=d_{i+1}$ for $i=1,2,\ldots,n$, $\sigma_0(a_{n+1})=a_{n+1}$ and $\tau_0(d_{n+1})=a_{n+1}$. 

\textit{$G_n$ is a sink parity game.}  The node $a_{n+1}$ is the sink node with low priority. Also, $\sigma_0$ is an admissible player 0 strategy. If the pebble enters any node $a_i$ if $\sigma_0$ is played, it will end at the sink. The only way player 1 could try to avoid this is by trying to keep the pebble within $d_1, d_2, \ldots, d_n$, but then this will create a cycle with even highest priority. Hence $\sigma_0$ is admissible. Likewise, $\tau_0$ is admissible, so we have a sink parity game.

\textit{Optimal strategies of the players.} 
Recall that a strategy having a high valuation (for player 0) corresponds to being able to pass through node with high even priorities and avoiding nodes with high odd priorities. 
Therefore, to maximize their valuation, player 0 lets the pebble pass $d_{n+1}$, which has the largest (and even) priority in the game. 
Playing a strategy $\sigma$ that achieves this when starting from a node $v$ yields $\left(\Xi_{\sigma}(v)\right)_{2n+4}=1$. 
This means the valuation is larger than any strategy that does not achieve this. 
There is only one strategy where player 0 can do this from every starting node $a_i$, namely the strategy with $\sigma(a_i)=a_{i+1}$ for $i<n$ and $\sigma_{a_n}=d_{n+1}$. 
Ironically, this differs from $\sigma_0$ in only one edge. 
However, as we will see, we avoid making this switch for a long time. Likewise, player 1's goal is to avoid $d_{n+1}$. 
Their only strategy to avoid it from every node $d_i$ is to pick $\tau(d_i)=d_{i+1}$ for $i<n$, and $\tau(d_n)=a_{n+1}$. 
Again, player 1 is only one switch from optimal when starting with $\tau_0$. 

\bigskip

The remainder of this section is dedicated to the proof of the following proposition, where we consider iterations in $5$ phases. 
In Section~\ref{sec:concludingproof}, we discuss how our main result follows from this.

\begin{proposition}\label{lem:induction}
Suppose SSI with the SWITCH-ALL rule on the game graph $G_n$ starts with the strategy pair $\sigma_0, \tau_0$. 
Then after $2^{n+1}-3$ iterations, the optimal strategies $\sigma$ and $\tau$ are found. 
The optimal strategies for both players do not appear in any earlier iteration.
\end{proposition}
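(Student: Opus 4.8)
The plan is to prove the exponential lower bound by induction on $n$, exploiting the self-similar structure of $G_n$ that the authors highlight. The inductive hypothesis should state more than just the iteration count: it must also pin down \emph{which} strategy pair SSI reaches at the end of solving the subgame on $\{a_2,\dots,a_{n+1},d_2,\dots,d_{n+1}\}$, and guarantee that during this entire run the crucial final switches ($\sigma(a_n)=d_{n+1}$ and $\tau(d_n)=a_{n+1}$) are \emph{not} made. Concretely, I would first compute the valuations $\Xi_\sigma$ and $\Xi_\tau$ under the initial pair $\sigma_0,\tau_0$, determine the optimal counterstrategies $\bar\sigma_0$ and $\bar\tau_0$, and verify by hand the first few iterations for small $n$ (say $n=1,2$) to establish the base case and to read off the pattern of switches. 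The key structural fact to isolate is that the optimal counterstrategy to a node's current strategy routes the pebble back toward $a_1$ or $d_1$ via the backward edges, so that the intersection with the counterstrategy edges in line~3 of Algorithm~\ref{alg:symstratit} only ever permits the `small' local switches, never the game-ending one.

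The main engine of the argument is to show that the run splits into the five phases the proposition alludes to, where the middle phases recursively reproduce the behaviour of SSI on $G_{n-1}$. I would argue that starting from $\sigma_0,\tau_0$, SSI must first bring the pair to the optimal configuration of the embedded $G_{n-1}$ (costing $2^{n}-3$ iterations by the inductive hypothesis), then both players make a single pair of switches near $a_n,d_n$ that resets the relevant subgame, after which the \emph{same} $G_{n-1}$ must be solved again (another $2^{n}-3$ iterations), and only then are the two final optimal switches made. Summing $2(2^n-3)+c$ for the appropriate small constant $c$ counting the boundary switches should telescope to $2^{n+1}-3$; I would fix $c$ by matching the base case. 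The second sentence of the proposition — that the optimal strategies appear \emph{only} at the very end — follows from the same invariant, since the valuation is strictly monotone along the run (by the results of \cite{Jurdzinski2000AGames} cited in the preliminaries) and the optimum is characterized by $(\Xi_\sigma(v))_{2n+4}=1$ resp.\ the analogous condition for player~1, which the invariant forbids until the last step.

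The hardest part, and where I would spend the most care, is proving the invariant that the optimal counterstrategies never expose the game-ending switch as a permitted improving move during the recursive phases. This is precisely the mechanism the authors describe as ``the optimal counterstrategy to a bad strategy can also be a bad strategy itself,'' so I expect the delicate bookkeeping to be in tracking $\bar\sigma$ and $\bar\tau$ as the strategies evolve: I must show that at every intermediate pair, $\bar\tau(a_n)\neq d_{n+1}$ and $\bar\sigma(d_n)\neq a_{n+1}$, so that even though $(a_n,d_{n+1})$ and $(d_n,a_{n+1})$ may lie in $I_\sigma$ or $I_\tau$, they are filtered out by the intersection with the counterstrategy edges. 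Establishing this cleanly will likely require a separate lemma computing the optimal counterstrategy to any strategy of the restricted form that arises along the run, using the lexicographic order $\tle$ on the priority vectors and the fact that the highest priorities $2n+3$ and $2n+2$ on $a_n,d_n$ dominate the comparison.

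Once the invariant and the phase decomposition are in place, the arithmetic is routine: I would carefully verify the two boundary phases (the initial approach to the subgame optimum and the final two switches) contribute exactly the constant needed so that the recursion $T(n)=2\,T(n-1)+c$ with the base value yields $T(n)=2^{n+1}-3$, and then conclude the second claim by monotonicity of the valuation.
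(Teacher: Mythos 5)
Your high-level skeleton---induction on $n$, a five-phase run with two recursive solves of an embedded $G_{n-1}$ plus a constant number of boundary iterations, and the key mechanism that the counterstrategy intersection in line~3 of Algorithm~\ref{alg:symstratit} only permits moves toward $a_1$ and $d_1$ and thereby withholds the game-ending switches---is the paper's approach. But two concrete points in your plan would derail the execution. First, the embedded copy of $G_{n-1}$ is \emph{not} the subgraph on $\{a_2,\dots,a_{n+1},d_2,\dots,d_{n+1}\}$: every back edge from $a_i$ ($i\ge 2$) points to $a_1$, so that induced subgraph has the wrong edge set and the inductive hypothesis cannot be applied to it. The correct identification lives on the \emph{low}-index nodes: while $\sigma(a_n)=a_{n+1}$ and $\tau(d_n)=d_{n+1}$, the valuations $\Xi_\sigma(a_n)$ and $\Xi_\sigma(d_n)$ (resp.\ $\Xi_\tau$) differ in a coordinate $\ge 2n+1$ that dominates everything of lower index (Observation~\ref{obs:value}), so one may replace $a_n$ by a priority-$1$ sink and $d_n$ by a high-even-priority node pointing to it, recovering exactly $G_{n-1}$ on $a_1,\dots,a_n,d_1,\dots,d_n$.

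Second, you place the two crucial switches $\sigma(a_n)\mapsto d_{n+1}$ and $\tau(d_n)\mapsto a_{n+1}$ at the very end, after the second recursive phase. In fact they occur in the \emph{middle} (iterations $2^{n}-1$ and $2^{n}$), preceded by a detour iteration in which player~0 first switches $a_n$ to the back edge $a_1$, because $\bar\tau(a_n)=a_1$ is the only permitted improving move at that moment; this detour is what makes the boundary constant $c=3$ rather than $2$ or $4$. The placement is not cosmetic: after these middle switches the order of $\Xi_\sigma(a_n)$ and $\Xi_\sigma(d_n)$ flips, so the low-index nodes, which had converged toward routing into $d_n$, must re-converge toward $a_n$---that flip is the entire justification for the factor $2$ in $T(n)=2T(n-1)+c$, and your version (crucial switches last) leaves the second recursive solve unmotivated. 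Relatedly, your invariant ``$\bar\sigma(d_n)\ne a_{n+1}$ at every intermediate pair'' holds only while $\sigma(a_n)=a_{n+1}$; it is precisely the switch $(a_n,a_1)$ that breaks it and releases $(d_n,a_{n+1})$ in the following iteration. You would also need the analogue of Observation~\ref{obs:improving} (that $(a_n,a_1)$ becomes improving only once the subgame has reached its optimum, and $(d_n,d_1)$ never does) to show the boundary iterations cannot happen earlier; monotonicity of the valuation alone does not give the ``not in any earlier iteration'' claim for each player separately.
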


\begin{proof} 
The reader can verify that one needs only one iteration to reach the optimum in $G_1$. 
We assume the claim to be true for $G_{j-1}$ and show that it holds for $G_j$ to conclude the proof by induction. We do this by showing that the iterations of SSI are as follows:
\begin{enumerate}
    \item $2^j-3$ iterations with switches from $a_1, a_2, \ldots, a_{j-1}$ and $d_1,d_2, \ldots, d_{j-1}$
    \item One iteration where the only switch made is $(a_j,a_1)$
    \item One iteration where the only switch made is  $(d_j,a_{j+1})$
    \item One iteration where the only switch made is  $(a_j,d_{j+1})$
    \item $2^j-3$ iterations with switches from $a_1, a_2, \ldots, a_{j-1}$ and $d_1,d_2, \ldots, d_{j-1}$
\end{enumerate}
Then clearly the total number of iterations is $2\cdot (2^j-3)+3=2^{j+1}-3$.
Now, we elaborate on these $5$ steps, where the first $2^j-3$ iterations need extra insights captured in the following three observations.

\begin{observation}
    \label{obs:counter}As long as $\sigma(a_j)=a_{j+1}$, we have $\bar{\sigma}(d_j)=d_1$. Likewise, as long as $\tau(d_j)=d_{j+1}$, we have $\bar{\tau}(a_j)=a_1$
\end{observation}
\begin{claimproof} Suppose $\sigma(a_j)=a_{j+1}$, and we look at what $\bar{\sigma}$ could be. Recall that player 1's goal is to reach $a_{j+1}$ without passing $d_{j+1}$. Starting from $d_{j}$, player 1 can do so in two ways: by picking $\bar{\sigma}(d_j)=a_{j+1}$, or by picking $\bar{\sigma}(d_j)=d_1$ and then choosing $\bar{\sigma}$ for $d_1, d_2, \ldots, d_{j-1}$ such that the pebble ends up at $a_j$. The first option gives $\left(\Xi_{\sigma}(d_j)\right)_{2j+1}=0$ while the second one gives $\left(\Xi_{\sigma}(d_j)\right)_{2j+1}=1$. Hence, the latter gives a lower valuation, and $\bar{\sigma}(d_j)=d_1$. We can prove similarly that while $\tau(d_j)=d_{j+1}$, we have $\bar{\tau}(a_j)=a_1$.
\end{claimproof}

\begin{observation}
    \label{obs:value}Suppose $\sigma(a_j)=a_{j+1}$ and $\tau(d_j)=d_{j+1}$. Then $\Xi_{\sigma}(a_j)\tl \Xi_{\sigma}(d_j)$ and $\Xi_{\tau}(a_j)\tl \Xi_{\tau}(d_j)$. In both cases, the valuation vectors compared differ in their $q$-position where $q\geq 2j+1$.
\end{observation}
\begin{claimproof}From the proof of Observation \ref{obs:counter} we know that when the pebble starts at $d_j$ and players play $\sigma, \bar{\sigma}$, the pebble goes to $d_1$, then to $a_j$ and directly to $a_{j+1}$. Hence $\Xi_{\sigma}(a_j)$ and $\Xi_{\sigma}(d_j)$ differ in their $p(d_j)=2j+2$-component, where the latter valuation has a 1. Then it follows that $\Xi_{\sigma}(a_j)\tl \Xi_{\sigma}(d_j)$, as the values in the vector corresponding to smaller priorities are insignificant when comparing these two valuations. We can likewise see that $\Xi_{\tau}(a_j)\tl \Xi_{\tau}(d_j)$ because they differ in their $p(a_j)=2j+1$-component.
\end{claimproof}

\begin{observation}\label{obs:improving}Suppose $\sigma(a_j)=a_{j+1}$ and $\tau(d_j)=d_{j+1}$. Then edge $(a_j,a_1)$ is improving only if $\sigma(a_i)=a_{i+1}$ for $i<j-1$ and $\sigma(a_{j-1})=d_j$. Edge $(d_j,d_1)$ is never improving.
\end{observation}
\begin{claimproof} Suppose player 0 switches the improving edge $(a_j,a_1)$ while $\sigma$ is different than specified above. The resulting strategy should also be admissible. However, player 1 can play strategy $\bar{\sigma}(d_i)=d_{i+1}$ for $i<j-1$ and $\bar{\sigma}(d_{j-1})=a_j$, creating a cycle with highest priority $p(a_j)=2j+1$. So the new strategy is not admissible, so $(a_j,a_1)$ could not have been improving. Likewise, suppose $(d_j,d_1)$ is improving for some strategy $\tau$ and we switch it. Then player 0 can play $\bar{\tau}(a_i)=a_{i+1}$ for $i<j-1$ and $\bar{\tau}(a_{j-1})=d_j$. This either creates a cycle with highest priority $p(d_j)=2n+2$ or player 1 creates another cycle within $d_1, d_2, \ldots, d_{j-1}$. In both cases, the resulting player 1 strategy cannot be admissible, so the edge could not have been improving.
\end{claimproof}

From Observations \ref{obs:counter} and \ref{obs:improving}, we can conclude that the strategies at $a_j$ and $d_j$ will not change until $\sigma$ fulfills the conditon of Observation \ref{obs:improving} ($\sigma(a_i)=a_{i+1}$ for $i<j-1$ and $\sigma(a_{j-1})=d_j$). Moreover, from Observation \ref{obs:value}, we notice that at this first part of the algorithm we may as well remove $a_{j+1}$ and $d_{j+1}$, and set $p(a_j)=1$ and add edges $(d_j,a_j)$, $(a_j,a_j)$ without changing any choices of SSI. But now we are left with an exact copy of $G_{j-1}$, so by induction hypothesis, we know that SSI needs $2^j-3$ iterations to reach the optimal pair of strategies there. The optimal player 0 strategy in $G_{j-1}$ is $\sigma(a_i)=a_{i+1}$ for $i<j-1$ and $\sigma(a_{j-1})=d_j$, which we only reach after $2^j-3$ iterations. So only then do we need to consider the original graph $G_j$ again and are we able to switch edge $(a_j,a_1)$ by Observation \ref{obs:improving}. 

\begin{figure}[t]
    \centering
    \includegraphics[width=\linewidth]{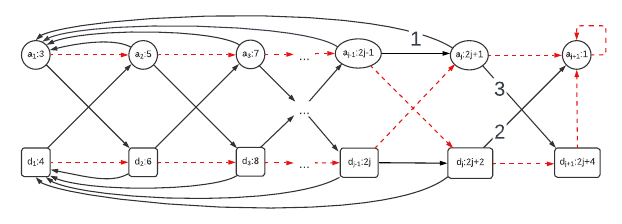}
    \caption{The strategies $\sigma,\tau$ (dashed) after $2^j-3$ iterations of SSI. Edges that are switched in the next 3 iterations are marked with 1,2,3.}
    \label{fig:my_label}
\end{figure}
\smallskip

\textit{The $(2^j-2$)-th iteration} There are no improving moves in $a_1, \ldots, a_{j-1}$ or in $d_1,\ldots, d_{j-1}$, as the strategies are 'optimal' strategies in $G_{j-1}$. However, we do not have optimal strategies in $G_j$ yet, so by Lemma \ref{lem:symworks}, there must be at least one edge switched by SSI. The only choice left is edge $(a_j, a_1)$, so this edge is now switched. 

\smallskip

\textit{The $(2^j-1)$-th iteration} 
We have $\sigma=\bar{\tau}$ (recall $\bar{\tau}$ from Observation~\ref{obs:counter}). 
Since $\tau$ is the same as in the last iteration, the nodes $d_1, \ldots, d_{j-1}$ do not have improving moves. 
Only the edge $(d_j, a_{j+1})$ can be switched. 
Since the strategies are not optimal yet, again Lemma~\ref{lem:symworks} implies that we switch exactly this edge in the $(2^j-1)$-th iteration.

\smallskip

\textit{The $2^j$-th iteration} 
At the start of the $2^j$-th iteration, we observe that player 1's strategy $\tau$ is equal to the counterstrategy $\bar{\sigma}$. 
This is since player 1 can only reach the sink (avoiding $d_{j+1}$) by playing $\bar{\sigma}(d_j)=a_{j+1}$, and  only with this current strategy can player 1 additionally pass node $a_{j}$ on the way there if player 0 plays $\sigma$. 
Hence, SSI does not make any improving moves for player 1. 
If player 0 makes any switch in the nodes $a_1, a_2, \ldots, a_{j-1}$, then this always allows player 1 to create a cycle with odd highest priority, so this cannot be an improving move. 
Therefore, SSI can only make a switch in $a_j$. 
The only improving move in $a_j$ is $(a_j,d_{j+1})$, so exactly this edge is switched in iteration $2^j$.

\smallskip

\textit{The final $2^j - 3$ iterations}  
Notice that we will never again make switches in nodes $a_j$ and $d_j$. 
Moreover, $\Xi_{\sigma}(a_j)\tr \Xi_{\sigma}(d_j)$ and $\Xi_{\tau}(a_j)\tr \Xi_{\tau}(d_j)$. 
We might as well replace $d_j$ by a sink node of priority 1, and $a_j$ by a node with priority $2j+2$ with an edge to the sink (and remove $a_{j+1},d_{j+1}$). 
This is without changing any future iterations of SSI. 
But then we have again a copy of $G_{j-1}$ with its respective starting strategies. 
Using the induction hypothesis again, SSI takes another $2^j-3$ iterations. 
It reaches the optimal strategies only in the last iteration. 
These strategies are also optimal in $G_j$, and this completes the proof of Proposition~\ref{lem:induction}.
\end{proof} 

\section{Adapted counterexample for generalization of symmetric strategy improvement}
\label{sec:adaptedexample}

We consider the worst-case performance of the generalized version of SSI, when the SWITCH-ALL rule is used. 
The generalized version can solve the games from the previous section quickly, as it considers more improving moves.
However, in this section we show that the generalized version still has exponential running time on a suitably modified version of the counterexample.

The overall structure is the same as for the original counterexample, but the nodes $a_i$ and $b_i$ are replaced by gadgets as shown in Figure~\ref{fig:relsymmod}. The full construction is described in Table \ref{tab:module}.
The nodes $a_i$ and $d_i$ (except the sink) have priorities larger than $N$ and the other nodes have priority smaller than $N$, so the priorities of $a_i$ and $d_i$ are still the most important ones. 

The function of these gadgets is to make it harder for the players to switch. 
Now for example, instead of just making a switch at $a_i$, player 0 has to switch their choice at both $c_i$ and $m_i$ to significantly change the course of the game. Additionally, the nodes $e_i,f_i,k_i,l_i$ make improving moves seem very insignificant with respect to differences in valuation. Similar to Proposition \ref{lem:induction}, the following proposition holds: 

\begin{proposition}
    \label{lem:induction2}
    Suppose generalized SSI on the game graph $G_n$ starts with $\sigma_0, \tau_0$. Then after $7\cdot 2^{n-1}-5$ iterations, the optimal strategies $\sigma$ and $\tau$ are found. 
    The optimal strategies for both players do not appear in any earlier iteration.
\end{proposition}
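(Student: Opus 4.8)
The plan is to follow the same inductive skeleton as in the proof of Proposition~\ref{lem:induction}, but with the phase count adjusted to reflect the gadgets. I would induct on $n$, checking by hand that generalized SSI needs exactly $2$ iterations on $G_1$ (matching $7\cdot 2^0-5=2$), and then assuming the statement for $G_{n-1}$ to derive it for $G_n$. The target recursion is $T(n)=2T(n-1)+5$: the iterations should decompose into a first block of $7\cdot 2^{n-2}-5$ iterations that solve an embedded copy of $G_{n-1}$, then $5$ iterations in which the crucial gadget switches at level $n$ are made one at a time, and finally a second block of $7\cdot 2^{n-2}-5$ iterations solving $G_{n-1}$ again. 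Since $2(7\cdot 2^{n-2}-5)+5=7\cdot 2^{n-1}-5$, this closes the induction. The reason the middle block grows from $3$ (as in Proposition~\ref{lem:induction}) to $5$ is that each of the single critical edges $(a_j,a_1)$, $(d_j,a_{j+1})$, $(a_j,d_{j+1})$ from the original construction is now realized inside a gadget (Figure~\ref{fig:relsymmod}) that must be reconfigured by two coordinated switches, at $c_j$ and $m_j$ in player~0's gadget and symmetrically in player~1's, rather than one.

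The core of the argument is to re-establish, for the gadget version and for the \emph{generalized} improving sets of Algorithm~\ref{gssi}, analogues of Observations~\ref{obs:counter}, \ref{obs:value} and~\ref{obs:improving}. Concretely I would show: (i) as long as the top-level gadget choice at level $j$ is unchanged, the optimal counterstrategies route the pebble back through the lower gadgets exactly as before, so the game seen on levels $1,\dots,j-1$ is isomorphic to $G_{j-1}$; (ii) the dominating $a_j$- and $d_j$-priorities (chosen larger than $N$) still force $\Xi_\sigma(a_j)\tl\Xi_\sigma(d_j)$ and $\Xi_\tau(a_j)\tl\Xi_\tau(d_j)$ throughout the first block, so the auxiliary nodes $e_i,f_i,k_i,l_i$ perturb only low-order components of the valuation vectors; and (iii) the critical gadget edges are improving neither in the $I$-sense nor in the new $J$-sense until the embedded $G_{j-1}$ has been fully solved. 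Step (iii) is where the generalization must be handled directly, since the selection set $(I_\sigma\cap J_\sigma(\tau))\cup(I_\tau\cap J_\tau(\sigma))$ is strictly larger than the counterstrategy-restricted set of ordinary SSI.

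Granting these, the self-similarity argument runs as in Proposition~\ref{lem:induction}. During the first block one contracts the level-$j$ gadget and its successors to a sink, turning $G_n$ into a copy of $G_{n-1}$ with the initial strategy pair, so the induction hypothesis supplies $7\cdot 2^{n-2}-5$ iterations. After the five crucial switches reconfigure the level-$j$ gadget, the valuations reverse to $\Xi_\sigma(a_j)\tr\Xi_\sigma(d_j)$ and $\Xi_\tau(a_j)\tr\Xi_\tau(d_j)$; contracting the gadget the other way exposes a fresh copy of $G_{n-1}$ with its starting strategies, and the induction hypothesis applies a second time, with the optimal strategies appearing only in the final iteration and being optimal in $G_n$ as well. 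At each of the five middle iterations one invokes Lemma~\ref{lem:symworks} to certify that the strategies are not yet optimal, so at least one switch is forced, and the observations pin this switch down to the unique admissible gadget edge.

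The main obstacle is precisely step (iii) for the generalized algorithm. Because $J_\sigma(\tau)$ and $J_\tau(\sigma)$ compare the opponent's valuations directly rather than through one fixed counterstrategy, they admit edges that ordinary SSI never considers, and I must rule out \emph{every} shortcut they might offer: unlocking a crucial switch before the subgame is solved, or executing both coordinated gadget switches ($c_j$ together with $m_j$) in a single iteration. This is exactly what the auxiliary nodes $e_i,f_i,k_i,l_i$ are designed to block, as they are placed so that the valuation gaps an opponent can exploit through $J_\sigma(\tau)$ or $J_\tau(\sigma)$ sit on low priorities and point in the wrong direction, leaving the decisive high-priority comparison forbidding the switch. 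Carrying out this case analysis over the gadget of Figure~\ref{fig:relsymmod} and Table~\ref{tab:module}, and verifying that the larger $J$-sets still never break the forced one-switch-at-a-time behaviour, is the technically delicate heart of the proof.
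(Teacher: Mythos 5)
Your proposal follows essentially the same route as the paper: induction with the recursion $T(n)=2T(n-1)+5$, the two contraction arguments exposing copies of $G_{n-1}$, the five middle iterations realizing the three critical switches of the original construction via coordinated gadget switches, and the key point that the high priorities $p(a_j),p(d_j)>N$ make the $J$-sets exclude the premature switches $(m_j,f_j)$ and $(g_j,k_j)$ even though they lie in $I_\sigma$ and $I_\tau$. The one detail you leave slightly imprecise --- why the middle block is $5$ rather than $6$ iterations (the switches $(c_j,a_1)$ and $(m_j,a_1)$ happen simultaneously in a single iteration, unlike the other two pairs) --- and the fact that the second embedded copy of $G_{n-1}$ starts with some low-level gadget nodes already switched (which the paper argues is harmless because the valuations of $c_i$ and $e_i$ stay close) are both handled explicitly in the paper's proof but do not change your overall argument.
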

\begin{proof}
\begin{figure}[t]
    \centering
    \includegraphics[width=\linewidth]{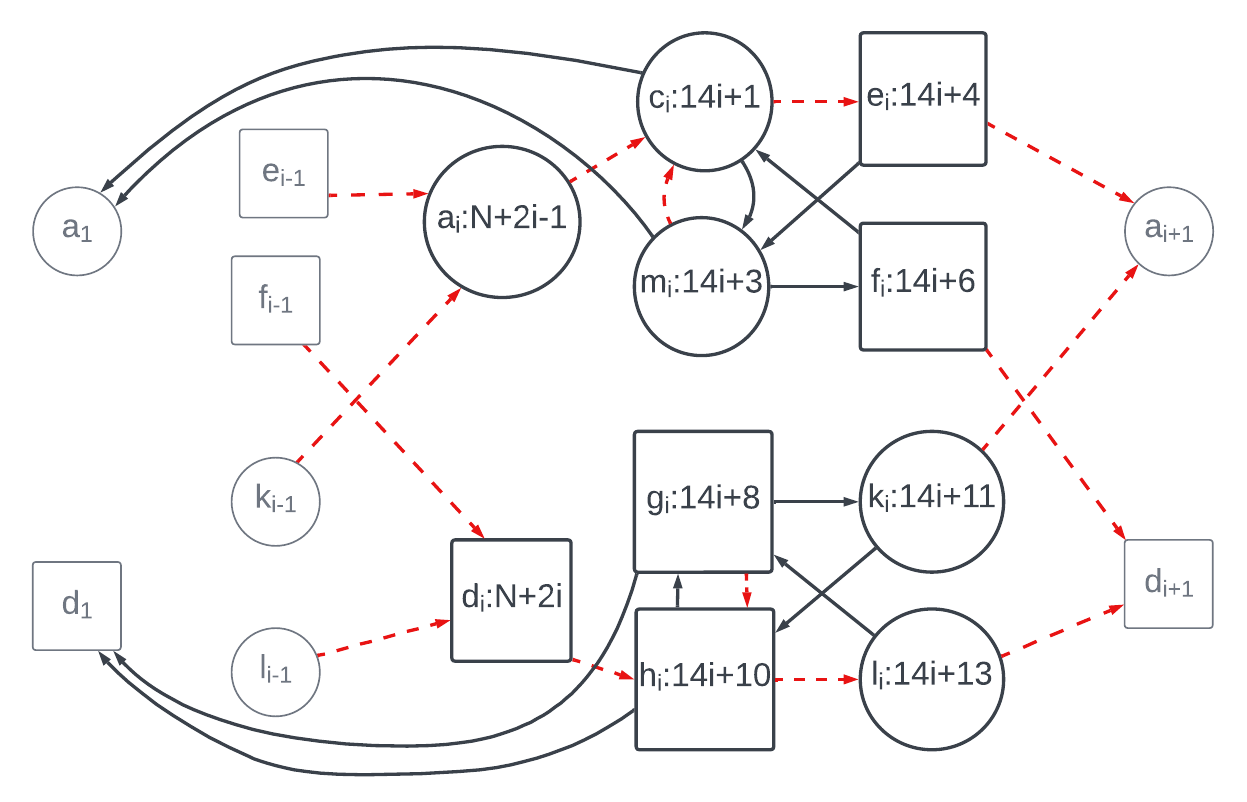}
    \caption{Subgraph for generalized SSI that replaces $a_i$ and $b_i$ when $i>1$. Initial strategies $\sigma_0$ and $\tau_0$ are dashed. Nodes that do not have all their incident edges shown are small and grey.}
    \label{fig:relsymmod}
\end{figure}
\begin{table}[b]
    \centering
    \begin{tabular}{c|c|c|ccc|c|c|c}
       Node  &  Player  & Priority  & Successors & & Node  &  Player  & Priority  & Successors \\ \cline{1-4} \cline{6-9}
       $a_i$ &  Player 0 & $N+2i-1$ &  $c_i$ & & $a_{n+1}$ & Player 0 & $1$ & $a_{n+1}$\\ 
       $d_i$  &  Player 1 & $N+2i$ &  $h_i$ & & $d_{n+1}$ & Player 1 & $N+2n+2$ & $a_{n+1}$\\ 
       $c_i$ & Player 0 & $14i+1$ & $e_i,m_i$, $(a_1)$ & & $e_i$ & Player 1 & $14i+4$ & $m_i,a_{i+1}$\\
       $m_i$ & Player 0 & $14i+3$ & $f_i,c_i$, $(a_1)$ & & $f_i$ & Player 1 & $14i+6$ & $c_i,d_{i+1}$\\
       $g_i$ & Player 1 & $14i+8$ & $k_i,h_i$, $(d_1)$ & & $k_i$ & Player 0 & $14i+11$ & $h_i,a_{i+1}$\\
       $h_i$ & Player 1 & $14i+10$ & $l_i,g_i$, $(d_1)$ & & $l_i$ & Player 0 & $14i+13$ & $g_i,d_{i+1}$\\     
    \end{tabular}
    \begin{tabular}{cccc}
         $\sigma_0(a_i)=c_i$ & $\tau_0(d_i)=h_i$ & $\sigma_0(a_{n+1})=a_{n+1}$ & $\tau_0(d_{n+1})=a_{n+1}$\\
         $\sigma_0(c_i)=e_i$ & $\sigma_0(m_i)=c_i$ & $\tau_0(g_i)=h_i$ & $\tau_0(h_i)=l_i$\\
         $\tau_0(e_i)=a_{i+1}$ & $\tau_0(f_i)=d_{i+1}$ & $\sigma_0(k_i)=a_{i+1}$ & $\sigma_0(l_i)=d_{i+1}$
    \end{tabular}
    \caption{Nodes, edges and initial strategies of the adapted counterexample. We have $N=16n+16$, and $i$ ranges from 1 to $n$. Nodes between brackets mean that they are only a successor if $i>1$.}
     \label{tab:module}
\end{table}
We show this by induction similar to before. For $n=1$, the reader can verify that in the first iteration we switch $(m_1,f_1)$ and $(g_1,k_1)$, and in the second iteration we switch $(h_1,g_1)$ and $(c_1,m_1)$ to reach the optimum. For the induction step, we assume the lemma holds for $G_{j-1}$, and show that the iterations of generalized SSI on $G_j$ are as follows:
\begin{enumerate}
    \item $7\cdot2^{j-1}-5$ iterations with switches in nodes $x_i$ with index $i<j$
    \item One iteration where only $(c_j,a_1)$ and $(m_j,a_1)$ are switched
    \item One iteration where only $(g_j,k_j)$ is switched
    \item One iteration where only $(h_j,g_j)$ is switched
    \item One iteration where only $(m_j,f_j)$ is switched
    \item One iteration where only $(c_j,m_j)$ is switched
    \item $7\cdot2^{j-1}-5$ iterations with switches in nodes $x_i$ with index $i<j$
\end{enumerate}

Similar to the proof of Proposition \ref{lem:induction}, we can argue about the phase of the algorithm where no switches are yet made in the nodes $c_j, m_j, g_j$ and $l_j$. We can observe that edges $(c_j,a_1)$ and $(m_j,a_1)$ are part of $\bar{\tau}$ but not yet improving until player 0 plays a specific strategy, and $(g_j,d_1)$ and $(h_j,d_1)$ are part of $\bar{\sigma}$ and never improving. So in the first phase, the only improving moves with index $j$ are $(m_j,f_j)$ and $(g_j,k_j)$. But we do not make these moves for a reason that is similar to Observation \ref{obs:counter}. If we consider the path of the pebble when $\tau$ and $\bar{\tau}$ are played, then starting from $f_j$, we go immediately to $d_{j+1}$, while from $c_j$ we go back to $a_1$ and through $d_j$ to $d_{j+1}$. So $\Xi_{\tau}(c_j)$ is much bigger (because of the $p(d_j)=N+2i$) than $\Xi_{\tau}(f_j)$. Hence generalized SSI never makes the switch $(m_j,f_j)$ in the beginning. Likewise, the move $(g_j,k_j)$ is postponed by the algorithm. Therefore, we can again pretend that $a_j$ is the sink and $d_{j}$ only has an edge towards the sink and use the induction hypothesis to show that this first phase takes $7\cdot 2^{j-1}-5$ iterations.

Now going back to the induction proof, the next five iterations of the algorithm are similar to the three iterations in the middle of the counterexample for SSI, except that we need two switches per module instead of one.

For the last part, we can again pretend that $d_j$ is the sink and $a_j$ a node with even priority with an edge to the sink. This yields a copy of $G_{j-1}$, but with nodes like $c_{j-1}$ and $e_{j-1}$ and $(f_{j-1})$ switched. This, however, does not affect any choices of the algorithm (as the valuations of $c_{i}$ and $e_i$ are always very close). Then using the induction hypothesis, it follows that this last phase takes $7\cdot 2^{j-1}-5$ iterations. This completes the induction proof.
\end{proof}

\section{Concluding the proof}
\label{sec:concludingproof}
We look at the last details to prove Theorem \ref{thm:worstcase}.
So far, we assumed that the improvement rule SWITCH-ALL is used. 
It turns out that in many iterations of (generalized) SSI, there is only one improving move. 
This implies the following result.

\begin{lemma}
    The results of Propositions \ref{lem:induction} and \ref{lem:induction2} hold for any improvement rule $f$.
\end{lemma}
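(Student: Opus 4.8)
The plan is to show that the sequences of iterations described in the proofs of Propositions~\ref{lem:induction} and~\ref{lem:induction2} are in fact forced for \emph{any} improvement rule, not just SWITCH-ALL. The key observation, already foreshadowed in the statement, is that the exponential structure of these counterexamples arises precisely because in the vast majority of iterations there is only a single candidate improving edge available to the algorithm. Since every improvement rule $f$ must satisfy $|f(S)|>0$ whenever $|S|>0$, if the set of selectable improving edges is a singleton $\{e\}$ in a given iteration, then $f$ has no choice: it must output $\{e\}$. Thus whenever the candidate set is a singleton, the behaviour of (generalized) SSI is independent of $f$.

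\textbf{First I would} audit each of the five phases of Proposition~\ref{lem:induction} (and the seven phases of Proposition~\ref{lem:induction2}) and classify them into two kinds. The three ``middle'' iterations in Proposition~\ref{lem:induction} — switching $(a_j,a_1)$, then $(d_j,a_{j+1})$, then $(a_j,d_{j+1})$ — were each already argued to have exactly one available improving edge: the proofs invoke Lemma~\ref{lem:symworks} to guarantee a nonempty candidate set and then show the stated edge is the \emph{only} option. For these iterations the conclusion for arbitrary $f$ is immediate, since a rule applied to a singleton returns that singleton. The analogous middle iterations of Proposition~\ref{lem:induction2} (phases 2--6) need the same treatment; note phase~2 switches two edges $(c_j,a_1)$ and $(m_j,a_1)$ simultaneously, but these lie at distinct nodes $c_j,m_j$, so I must check that the candidate set at that iteration consists of exactly these two edges (one per node), whence any $f$ selecting at least one of them keeps the construction on track — or, more cleanly, that the surrounding recursion still forces the same net progress regardless of which subset $f$ picks.

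\textbf{The harder part} is the two long phases of $2^j-3$ (respectively $7\cdot 2^{j-1}-5$) iterations, which are handled by the induction hypothesis on the embedded copy of $G_{j-1}$. Here I cannot simply claim singletons, since within the subgame several nodes may have improving edges at once. Instead I would argue that the reduction to $G_{j-1}$ established in the SWITCH-ALL proof is \emph{independent of the improvement rule}: the observations (notably Observations~\ref{obs:counter}--\ref{obs:improving}) show that the edges with index $j$ are either not improving or not selectable for the entire duration of the phase, \emph{no matter what $f$ does inside the subgame}. Consequently the embedded subgame evolves exactly as an autonomous instance of $G_{j-1}$ under the \emph{same} rule $f$, and the induction hypothesis — which I would strengthen to state ``for every improvement rule'' — applies verbatim. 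The base case ($G_1$, or $n=1$) is a direct finite check that holds for any $f$ because each of its one or two iterations again presents a forced singleton (or a set at a single node).

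\textbf{I expect the main obstacle} to be the book-keeping in phase~2 of Proposition~\ref{lem:induction2}, where two edges are switched together: I must verify that no improvement rule can select a \emph{proper} nonempty subset that derails the intended trajectory, or else show that any such partial switch is completed in a bounded number of extra forced iterations that do not change the asymptotics. Provided this is resolved, the whole argument reduces to: \emph{every} iteration in the described trajectory either presents a singleton candidate set (forcing $f$'s output) or lies inside a recursively-controlled subgame where index-$j$ edges are provably inert, so that the inductive count $2^{n+1}-3$ (resp.\ $7\cdot 2^{n-1}-5$) is achieved for all $f$.
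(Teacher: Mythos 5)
Your overall strategy -- classify iterations into those where the candidate set is forced and those where it is not, and push the quantifier ``for every improvement rule'' into the induction hypothesis -- is the right shape, and the latter restructuring is arguably cleaner than the paper's presentation. However, there are two genuine gaps. First, you mislocate where the multi-switch iterations actually occur in the first construction. You assert that the base case ``presents a forced singleton (or a set at a single node)'', but the single iteration of $G_1$ makes switches at \emph{both} $a_1$ (a player~0 node) and $d_1$ (a player~1 node) -- two edges at two distinct nodes -- so an arbitrary rule $f$ may legally select a proper nonempty subset. Worse, this is not a one-off finite check: every time the recursion bottoms out at the $a_1,d_1$ level (which happens throughout the exponentially long run), SWITCH-ALL makes two simultaneous switches, so your claim that the long phases reduce to ``an autonomous instance of $G_{j-1}$ under the same rule $f$'' silently inherits this unresolved case at every level of the recursion.

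Second, you explicitly leave the resolution of the simultaneous switches as a conditional (``provided this is resolved''), whereas that resolution is the actual content of the lemma's proof. The missing ingredient is the paper's notion of an \emph{irrelevant} switch: in every iteration where SWITCH-ALL switches more than one edge, all but one of those switches can be shown not to affect the future course of the algorithm, so a rule that omits some of them can only \emph{increase} the iteration count. Concretely, for the first construction one checks that Observations~\ref{obs:counter}--\ref{obs:improving} continue to hold whether or not the $d_1$ switch is made, and that if only $d_1$ is switched then $a_1$ is the unique switch in the next iteration; for the second construction one needs a three-case analysis of which of $(c_i,a_1)$ and $(m_i,a_1)$ is taken, showing in each case that the omitted switch is either unreachable for player~0 or changes the valuation only in a priority too low to matter. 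Without this argument the lemma does not follow, since nothing a priori prevents a perverse $f$ from selecting a subset that derails the intended trajectory.
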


\begin{claimproof}
    We prove this by arguing that if we use SWITCH-ALL on the graphs $G_n$ for both constructions, then either there is only one switch possible, or every switch but one switch is irrelevant. Here irrelevant means that whether or not we make the switch, the further course of the algorithm does not change, hence the number of iterations can only increase if we decide not to make some improving switches.
    
    First, we consider the class of games $G_n$ that proved Proposition \ref{lem:induction} (Figure  \ref{fig:sink_main}). If we use the SWITCH-ALL rule, then by following the induction proof we find that we make two switches per iteration whenever we make switches at $a_1$ and $d_1$ (following from the induction basis), and we make one switch in every other case (from the induction step). Note however, that not making a switch in $d_1$ never has any effect on the course of the algorithm. Observations \ref{obs:counter}, \ref{obs:value}, \ref{obs:improving} still hold if no switch is made in $d_1$, and so does the rest of the induction step. It also follows that if only $d_1$ is switched and $a_1$ not, then in the next iteration $a_1$ has the only possible switch. In all cases, symmetric strategy improvement takes at least as many iterations as for SWITCH-ALL.

    Next we consider the games $G_n$ from the proof of Proposition \ref{lem:induction2} (Figure \ref{fig:relsymmod}). There are two cases where there are multiple switches when SWITCH-ALL is used. First, the modules attached to $a_1$ and $d_1$ are switched at the same time. However, for the same reason as above, we can ignore the second module. Secondly, $(c_i,a_1)$ and $(m_i,a_1)$ are switched at the same time. Note that if $(c_i, a_1)$ is switched, then any switch in $m_i$ becomes irrelevant as $m_i$ is not reachable at that moment for player 0, and we make the switch $(m_i,f_i)$ three iterations later. If, on the other hand, only $(m_i,a_1)$ is switched while $\sigma(c_i)=e_i$, then nothing significant changes in the game and in the next iteration we will still have to switch $(c_i,a_1)$. Finally, if $(m_i,a_1)$ is switched while $\sigma(c_i)=m_i$ and $\sigma(m_i)=f_i$, then the switch $(c_i,a_1)$ becomes irrelevant after the switch as it only removes one $14i+3$-priority from the valuation. So in all cases, there is only one relevant switch. We conclude that using an improvement rule for this $G_n$ takes at least as many iterations as for SWITCH-ALL.
\end{claimproof}

Note that the number of nodes and edges of $G_n$ for regular SSI is $2n+2$ and $6n$, respectively. 
Also, the running time of SSI cannot be more than the total number of strategies for both players, which is exponential in the number of nodes and edges. 
Hence the worst-case running time of symmetric strategy improvement for parity games is exponential in the number of nodes and edges. 
Moreover, we already noted that in sink parity games, the valuations from the literature are equivalent. 
Furthermore, the result for mean and discounted payoff games can be shown analogously to \cite[Theorem~4.19]{Friedmann2011ExponentialPrograms}. 
This concludes the proof of Theorem~\ref{thm:worstcase}.

Finally, the number of nodes and edges of our game $G_n$ for generalized SSI is also linear in $n$. 
This yields the following result.
\begin{theorem}
The results from Theorem \ref{thm:worstcase} also hold for generalized SSI.
\end{theorem}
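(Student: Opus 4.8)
The plan is to assemble the final statement from the machinery already developed for the generalized version, mirroring exactly the concluding argument used for Theorem~\ref{thm:worstcase}. The heavy lifting is done by Proposition~\ref{lem:induction2}, which exhibits, for each $n$, a gadget-based sink parity game $G_n$ on which generalized SSI with the SWITCH-ALL rule performs $7\cdot 2^{n-1}-5$ iterations before reaching the optimum, together with the preceding lemma that lifts this lower bound from SWITCH-ALL to an arbitrary improvement rule $f$. So the core of the proof is simply to translate the iteration count $7\cdot 2^{n-1}-5 = \Theta(2^n)$ into an exponential lower bound measured against the input size.

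First I would count the size of the adapted $G_n$. Reading off Table~\ref{tab:module}, each index $i \in \{1,\dots,n\}$ contributes the ten nodes $a_i,d_i,c_i,m_i,g_i,h_i,e_i,f_i,k_i,l_i$, and there are two additional terminal nodes $a_{n+1},d_{n+1}$; hence $G_n$ has $10n+2$ nodes. Every node has at most three outgoing edges, so the edge count is also $O(n)$. Thus both the number of nodes and the number of edges are linear in $n$, while the number of iterations is $\Theta(2^n)$, which is therefore exponential in the size of the graph.

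Next I would invoke the same boundedness and equivalence arguments as in the proof of Theorem~\ref{thm:worstcase}. The running time of generalized SSI is bounded above by the total number of strategy pairs, which is finite and exponential in the number of nodes, so the worst case is genuinely exponential rather than merely unbounded. Since $G_n$ is a sink parity game, the valuation used here coincides with the other valuations from the literature, so the lower bound holds for each of them; and the result for mean and discounted payoff games follows by the same reduction as in \cite[Theorem~4.19]{Friedmann2011ExponentialPrograms}.

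I expect no genuine obstacle here, since all the difficult analysis of the gadget dynamics is already encapsulated in Proposition~\ref{lem:induction2} and its improvement-rule-independent strengthening. The only point requiring care is confirming that the equivalence of valuations and the mean/discounted payoff reductions, both originally phrased for the ungadgeted construction, carry over verbatim to the gadget graph; they do, because $G_n$ remains a sink parity game with admissible initial strategies, and the gadget priorities were chosen via $N=16n+16$ precisely so that the $a_i,d_i$ priorities dominate, keeping the structural hypotheses of those reductions intact.
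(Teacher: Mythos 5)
Your proposal is correct and follows essentially the same route as the paper: the paper's own argument for this theorem is just the observation that the gadget graph $G_n$ still has linearly many nodes and edges (you make this explicit with the count $10n+2$), combined with Proposition~\ref{lem:induction2}, its improvement-rule-independent strengthening, and the same valuation-equivalence and mean/discounted-payoff reduction arguments already used to conclude Theorem~\ref{thm:worstcase}. No gaps.
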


\section{Discussion}
We showed that both symmetric strategy improvement (SSI) and a generalization have exponential worst-case time complexity. 
The reason is that they can make too few crucial switches, as they are distracted by their opponent's (bad) strategy. 
In our example, the opponent's valuation always steers the players to make `bad' improving moves. 
Hence, even using the local information of the other player's valuation is not always useful (an implementation of SSI and its worst-case example can be found in Oink~\cite{vanDijk2018Oink:Solvers}). 
Remarkably, no improvement rule can fix this issue. 
Furthermore, we presented a generalization of SSI and its worst-case example which allows for more flexibility. 

The parity game example presented in Figure \ref{fig:sink_main} occurs to be trivial to solve for various natural implementations. 
One could think of preprocessing techniques like SCC decomposition, removing self-loops or choosing an initial strategy with some heuristic (in fact, the construction in the proof of Lemma \ref{lem:reducetosink} gives such a heuristic that would solve the parity game quickly). 
There are further tweaks like propagating information about which nodes are won through the graph by attracting towards them. 
However, we argue that the main principles of our counterexamples are robust against such tricks. Imagine the parity game of Figure \ref{fig:sink_main} to be part of a larger parity game, where $a_{n+1}$ looks much higher valued than $d_{n+1}$ at the beginning. Suppose that in reality $d_{n+1}$ is winning for player 0 (or player 0 has a strategy such that it has a large valuation in the version of SSI used here), and $a_{n+1}$ is winning for player 1 (or player 1 can let it have a low valuation). 
Then, until SSI discovers the true value of $a_{n+1}$ and $d_{n+1}$, the best possible strategies to play from $a_1, a_2, \ldots$ and $d_1, d_2, \ldots$ are in fact the strategies we use as initial strategies in Section \ref{sec:counterexample-basic}. So most likely, the preprocessing would pick these strategies or we quickly end up with them after some iterations. 
Secondly, SCC decomposition can be tricked by adding edges back from the rest of the game to $a_1, d_1$ that are bad choices for the player that can choose them, and the propagation trick is stopped by the modules from Figure \ref{fig:relsymmod}.  It is likely that other optimizations, assuming that they have a weakness, will be stopped by adding modules that exploit this weakness in a similar way. 
Note, however, that structures like those in our hard instances are unlikely to show up in practice. 
As observed by the authors of SSI \cite{Schewe2015SymmetricImprovement}, the number of iterations of SSI is low on randomly generated instances.

Our work leaves open if there is actually any way to use the interplay of strategies of the two players which does not end up with exponential worst-case running time. 
Our constructions suggest that the restriction imposed by using local information of the opponent's strategy might always be exploited to lure the iterations into an exponential trap. 

\bibliography{references}

\appendix
\newpage
\section{Example}

\begin{example}\label{ex:reduce}
Now we illustrate the reduction to a sink parity game and the symmetric strategy improvement algorithm with an example. Suppose we have a parity game as on the left in Figure \ref{fig:symmexample}. This can be transformed into a sink parity game as in Lemma \ref{lem:reducetosink}. Two copies of the resulting sink parity game are shown on the right in Figure \ref{fig:symmexample}. One copy shows a trivial admissible player 0 strategy $\sigma$, with its optimal response $\bar{\sigma}$. The other one shows an admissible $\tau$ with its optimal response $\bar{\tau}$.

\begin{figure}[t]
    \centering
    \includegraphics[width=\linewidth]{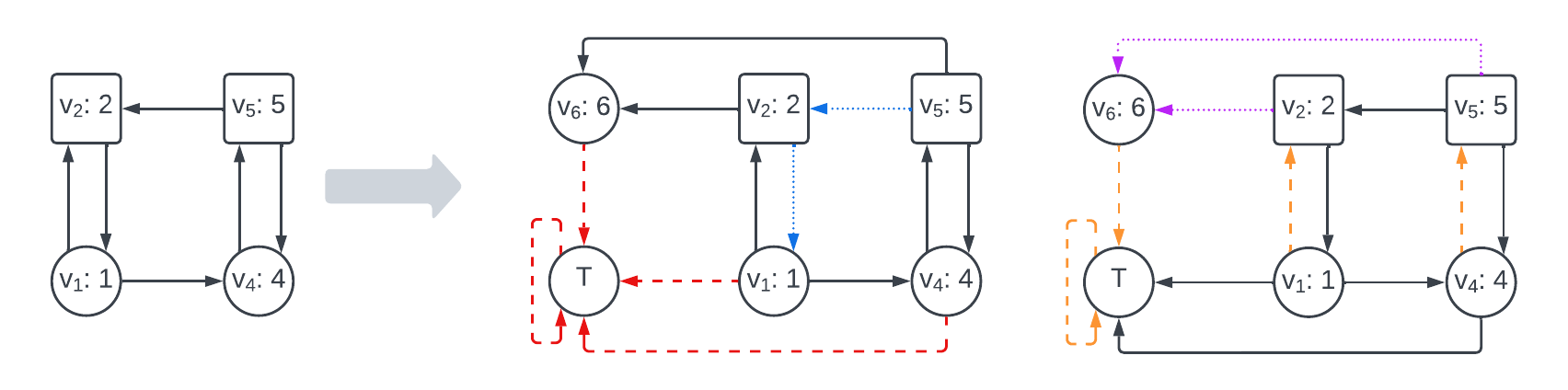}
    \caption{Left: a parity game. Middle: strategy $\sigma$ (dashed) and counterstrategy $\bar{\sigma}$ (dotted) in the resulting sink parity game. Right: Strategy $\tau$ (dashed) and counterstrategy $\bar{\tau}$ (dotted).}
    \label{fig:symmexample}
\end{figure}

Now we look at what symmetric strategy improvement does if we start with the pair of strategies shown in Figure~\ref{fig:symmexample}. 
We denote the valuations $\Xi_{\sigma}$ and $\Xi_{\tau}$ in this game by $(a_1,a_2,a_4,a_5,a_6)$, where $a_i$ stands for $\left(\Xi_{\sigma}\right)_i$ or $\left(\Xi_{\tau}\right)_i$. Now we find the improving moves. Player 0 has two improving moves, the first is $(v_1,v_2)$, as $\Xi_{\sigma}(v_2)=(1,1,0,0,0)\tr \Xi_{\sigma}(\sigma(v_1))=\Xi_{\sigma}(\top)=(0,0,0,0,0)$, and the other improving move is $(v_1,v_4)$. Since $\bar{\tau}(v_1)=v_2$, the symmetric strategy improvement algorithm will switch player 0's choice in $v_1$ to $v_2$. Player 1 has one improving move, namely $(v_5,v_4)$, as $\Xi_{\tau}(v_4)=(0,0,1,1,1)\tl (0,0,0,0,1)=\Xi_{\tau}(v_6)$. However, $\bar{\sigma}(v_5)=v_2$, so $I_{\tau}\cap \{(v,\bar{\sigma}(v))| v\in V_0\}$ is empty, and we do not change $\tau$ in this iteration. So in the first iteration, the only switch that is made is changing $\sigma(v_1)$ to $v_2$. The reader can verify that in the second iteration of the symmetric strategy improvement algorithm, only the switch $(v_5,v_4)$ is made for player 1, and that after that, the algorithm terminates. 
Then, we have a pair of optimal strategies $\sigma$ and $\tau$ as shown on the left in Figure~\ref{fig:symmexample2}. 
We can now infer the winning sets $W_0$ and $W_1$ of the original game from the $6$-element of the valuations of the nodes. Also, the optimal strategies in the sink game form winning strategies in the original game.
\begin{figure}[t]
    \centering
    \includegraphics[width=0.8\linewidth]{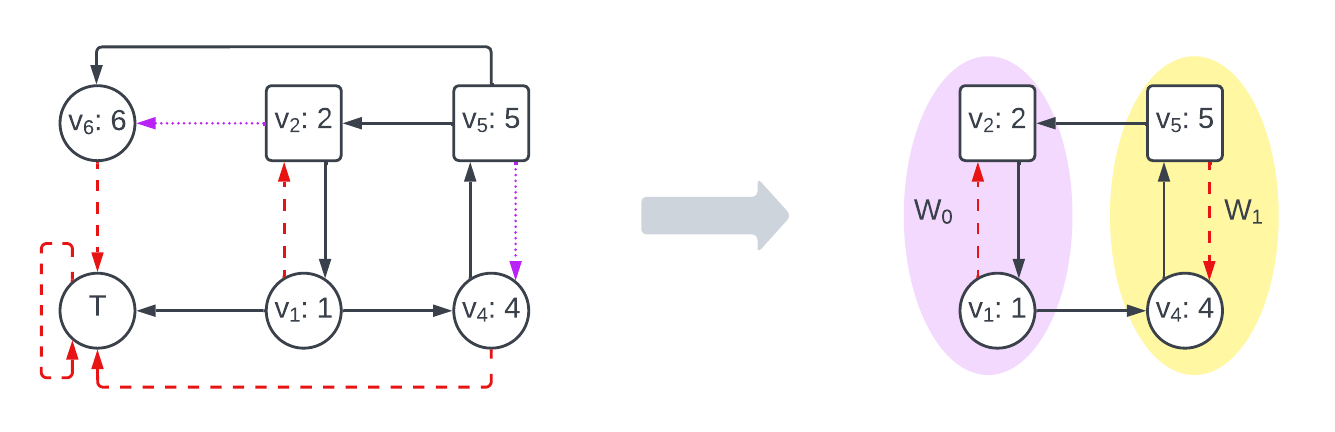}
    \caption{Left: optimal strategies $\sigma$ (dashed) and $\tau$ (dotted), found after performing symmetric strategy improvement. Right: the resulting winning sets and winning strategies (dashed) in the original game.}
    \label{fig:symmexample2}
\end{figure}
\end{example}

\section{Note on further increasing the number of switches}
On the number of switches, note that with the adapted counterexample, generalized symmetric strategy improvement still makes `too few' switches, because it puts off making some good switches. It considers the large difference in $\Xi_{\tau}$ for player 0 switches and the large difference in $\Xi_{\sigma}$ for player 1 switches. That opens the question if one could further increase the number of switches that generalized symmetric strategy improvement makes. A next logical increase would be to always make an improving move in a node if one can, and then possibly be guided by the opponent's strategy for deciding which improving move to make. However, in a run of regular strategy improvement on the switch-best counterexample from \cite{Friedmann2011ExponentialPrograms}, there is always exactly one improving move per node, and therefore this generalization would behave the same on this game for player 0 as strategy improvement where every possible switch is made. \footnote{There is one exception in the iteration after their so-called deceleration lane resets, here we can choose to which node in the deceleration lane we go. However, for this generalization, it is both from the perspective of $\Xi_{\sigma}$ and $\Xi_{\tau}$ always better to switch to the highest even node in the lane like in regular strategy improvement. Hence it would make sense to assume that this always happens.} Hence there is still an exponential-time worst-case example. This time, one could say that it is because the algorithm is switching too many edges, as the main idea behind this example is that the parts representing significant bits are distracted by unnecessary switches so that the insignificant bits are being switched very often like a binary counter.
\end{document}